\newtheorem{prop}{Proposition}
\newtheorem{lemma}{Lemma}
\newtheorem{remark}{Remark}
\newtheorem{definition}{Definition}
\newtheorem{corollary}{Corollary}
\begin{document}
\title{Jamming-Aided Secure Communication \\
in Massive MIMO Rician Channels}

\author{\authorblockN{Jue Wang, {\em Member, IEEE,}~Jemin Lee, {\em Member, IEEE,}\\
~Fanggang Wang, {\em Member, IEEE,} and Tony Q. S. Quek, {\em Senior Member, IEEE}}\\
\thanks{Manuscript received December 19, 2014, revised April 21, 2015. The editor coordinating the review of this paper and
approving it for publication was Z. Han.}
\thanks{Jue Wang, Jemin Lee, and Tony Q. S. Quek are with Singapore University of Technology and Design, Singapore 487372. Email: \{jue\_wang,~jemin\_lee,~tonyquek\}@sutd.edu.sg}
\thanks{Jue Wang is also with School of Electronic and Information Engineering, Nantong University, Nantong 226019, China.}
\thanks{Fanggang Wang is with State Key Laboratory of Rail Traffic Control and Safety,
Beijing Jiaotong University, Beijing 100044, China. Email: wangfg@bjtu.edu.cn}
\thanks{The contact author is J. Lee.}
\thanks{This research was partly supported by the Temasek Research Fellowship, the National Natural Science Foundation 61401240, 61201201, U1334202, the Key Grant Project of Chinese Ministry of Education (No. 313006),  the Fundamental Research Funds for the Central Universities (2015JBM112), and the State Key Lab of Rail Traffic Control and Safety (No. RCS2014ZT09).}
\thanks{Part of this work has been presented in {\em IEEE Global Commun. Conf., Workshop on  Massive MIMO: From theory to practice}, Austin, TX, Dec. 2014.}}

\markboth{IEEE Transactions on Wireless Communications}{}

\maketitle

\begin{abstract}

In this paper, we investigate the artificial noise-aided jamming design for a transmitter equipped with large antenna array in Rician fading channels. We figure out that when the number of transmit antennas tends to infinity, whether the secrecy outage happens in a Rician channel depends on the geometric locations of eavesdroppers. In this light, we first define and analytically describe the secrecy outage region (SOR), indicating all possible locations of an eavesdropper that can cause secrecy outage. After that, the secrecy outage probability (SOP) is derived, and a jamming-beneficial range, i.e., the distance range of eavesdroppers which enables uniform jamming to reduce the SOP, is determined. Then, the optimal power allocation between messages and artificial noise is investigated for different scenarios. Furthermore, to use the jamming power more efficiently and further reduce the SOP,
we propose directional jamming that generates jamming signals at selected beams (mapped to physical angles) only, and power allocation algorithms are proposed
for the cases with and without the information of the suspicious area, i.e., possible locations of eavesdroppers. We further extend the discussions to multiuser and multi-cell scenarios. At last, numerical results validate our conclusions and show the effectiveness of our proposed jamming power allocation schemes.

\end{abstract}

\begin{IEEEkeywords}
  Jamming, massive MIMO, outage, security.
\end{IEEEkeywords}

\section{Introduction}

Massive multiple-input multiple-output (MIMO) systems, where an enormous number of antennas are deployed at the base station, have become a hot research area in recent years \cite{Rusek:SigMag13,Larsson:ComMag14}. As the number of antennas goes to infinity, the effect of uncorrelated interferences and noises can tend to zero asymptotically by using only simple linear transmit/receive techniques \cite{Marzetta:TWC10}, leading to intensive growth in spectrum and power efficiency \cite{Ngo:TCom13}. When used for beamforming, massive MIMO leads to sharp beam patterns as well as low power leakage to unintended directions \cite{Alrabadi:JSAC13}. Due to these attractive properties, massive MIMO becomes a promising technique for future communication systems such as the fifth generation cellular system \cite{Andrews:JSAC14}. In the meanwhile, it can be anticipated that massive MIMO will also become crucial in security related applications.

Secure communication in wiretap channels has been studied for decades since the seminal work \cite{Wyner}. Corresponding studies have been further extended to different type of wiretap channels \cite{Csiszar78,Ozarow85}, fading channels \cite{Li:ISIT07,Gopala:TIT08}, MIMO channels \cite{Shafiee:TIT09,Oggier,Khisti:ITp1,Khisti:ITp2}, and networks \cite{Lee:JSAC13,Win:Netw14}. The research topics span a wide range from information-theoretical contributions such as secrecy capacity analysis and rate region characterization to practical transmission design issues including precoding, user scheduling, and artificial noise (AN)-aided jamming. For a complete review of the most lately approaches, see \cite{Hong:SigMag13,Survey}.
Regarding the communication secrecy, the emergence of the massive MIMO technique brings new opportunities and challenges.
{Recently, physical layer security techniques using massive MIMO have drawn increasing attentions in the literature. In \cite{Geraci:TCom12,Geraci:JSAC13,Geraci:TWC14,Jun}, the secrecy rate in massive MIMO systems has been analyzed using large system analysis and secure precoding schemes were designed. In \cite{XiongTIFS15,PIMRC}, it has been shown that massive MIMO can benefit the detection of active eavesdropper who performs attacking on the channel training phase \cite{Zhou}. Note that the above-mentioned approaches require either the channel state information (CSI) of eavesdroppers can be known, or their existence can be detected. For the scenarios that eavesdroppers are completely passive and their CSI is unknown, AN-aided jamming \cite{Goel:TWC08} can be a feasible solution. Only recently, the AN-aided jamming approach has been applied for massive MIMO systems in \cite{Zhu:Arxiv14,Zhu:EW14} and was shown to be beneficial for communication secrecy.}


In this paper, we study the secure communication in massive MIMO systems via AN-aided jamming.
Differently from \cite{Zhu:Arxiv14}, we consider the scenario that eavesdroppers are randomly located around a legitimate transmitter equipped with large antenna array, and all channels follow Rician distribution. In this case, the geometric locations (described by both the {angle of arrival} and {distance to the transmitter}) of the legitimate receivers and eavesdroppers become essential in the secrecy outage analysis, which highlights the main difference between our work and \cite{Zhu:Arxiv14}.
The motivation of our paper is based on the following considerations: 1) Since the beam towards the legitimate receiver becomes sharper and the power leakage to other directions becomes trivial in massive MIMO systems, it is doubtful whether jamming is still beneficial for secrecy, and 2) as the number of antennas grows, the dimension of the jamming space increases and jamming power needs to spread over a large number of directions, which makes conventional {\em uniform} jamming inefficient with massive MIMO. Regarding these issues,
two questions are raised:
\begin{enumerate}
  \item Does conventional uniform jamming still benefit the secure communication in massive MIMO systems when $N_t$ goes to infinity?
  \item Is there more efficient scheme rather than uniform jamming in the massive MIMO setup?
\end{enumerate}
In this paper, we will answer these two questions by making the following contributions:
\begin{itemize}
  \item
  For the massive MIMO Rician fading channels, we analytically describe the secrecy outage region (SOR) as geometric locations of eavesdroppers that can induce secrecy outage. The concept of SOR further has been used to characterize the secrecy outage probability (SOP).
  \item
  With the information of the {\em suspicious area} where eavesdroppers are possibly located, we derive analytical expression of the SOP in the presence of one legitimate receiver and multiple passive eavesdroppers. After that, it is proved that conventional uniform jamming is still useful in terms of reducing the SOP when any eavesdroppers are located within a certain distance range to Alice, which we call it as the jamming-beneficial range. This conclusion provides an answer to the first question.
  \item
  For uniform jamming, the optimal signal and jamming power allocation is investigated for different scenarios. We further devise practical {\em directional jamming} algorithms, either with or without the information of the suspicious area. The proposed directional jamming schemes use the jamming power more efficiently to further and substantially reduce the SOP, which provides answers to the second raised question.
\end{itemize}

The rest of this paper is organized as follows: Section II provides system model. Section III describes the SOR, further provides an analytical expression of SOP and a jamming-beneficial range. Optimal jamming power allocation is studied for uniform jamming in Section IV, and in Section V, directional jamming algorithms are proposed. In Section VI, the SOR is discussed for multiuser and multi-cell scenarios. Section VII concludes this paper.

\section{System Model}

In this section, we first present the network model. As an important concept in subsequent analysis, we further define the normalized crosstalk between two wireless links and introduce its characteristics. Then, the AN-aided secure transmission and the definition of SOP are described.

\subsection{Network Model}

\begin{figure}[tbp]
\centering
  \includegraphics[width=0.9\columnwidth]{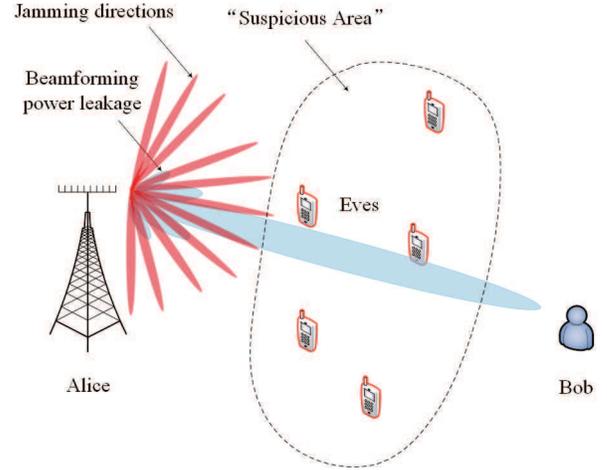}
  \caption{Description of the network layout.}
  \label{fig:SysMod}
\end{figure}

We consider the network shown in Fig.~\ref{fig:SysMod}, where a transmitter (Alice) equipped with $N_t$ antennas transmits to a single-antenna user (Bob) in the existence of $L$ external passive single-antenna eavesdroppers (Eves $1, ..., L$). Alice uses beamforming for the data transmission to Bob, while jamming with AN in other spaces (or directions). We define the set of receivers $\mathcal{I}_{\rm r} = \{{\rm b},{\rm e}_1, ..., {\rm e}_L\}$ where ${\rm b}$ denotes Bob and ${\rm e}_l$ $(l = 1,...,L)$ denotes Eve $l$. Considering Rician fading, the channel between Alice and receiver $i$ is given by
\begin{equation}
\label{eq:h leg}
  \mathbf{h}_i = \sqrt{\frac{K_i}{1 + K_i}}\bar{\mathbf{h}}_i + \sqrt{\frac{1}{1 + K_i}}\mathbf{g}_i, ~\forall i \in \mathcal{I}_{\rm r}
\end{equation}
where $K_i$ is the Rician $K$-factor, $\mathbf{g}_i \in \mathbb{C}^{N_t \times 1}$ is the i.i.d. fast fading part whose elements follow $\mathcal{CN}(0,1)$ distribution (complex normal distribution with zero mean and unit variance). For uniform linear array with inter-antenna spacing $d_0$ (in wavelength), the line of sight (LOS) component $\bar{\mathbf{h}}_i$ can be written as the steering vector at incident angle $\theta_i$:
\begin{equation}\label{eq:2}
  \bar{\mathbf{h}}_i = \bar{\mathbf{s}}(\theta_i) = \left(1, e^{-j2\pi d_0\sin \theta_i}, ..., e^{-j2\pi(N_t - 1)d_0\sin \theta_i}\right)^T
\end{equation}
where $\theta_i$ is the LOS angle of receiver $i$.
In addition, we consider large scale fading ${d_i^{-\alpha}}$
where $d_i$ is the distance from Alice to receiver $i$, and $\alpha$ is the path loss coefficient.

\newcounter{TempEqCnt}
\setcounter{TempEqCnt}{\value{equation}}
\setcounter{equation}{6}
\begin{figure*}
\begin{equation}\label{eq:19}
      F_{s_{i;j}}(x)
      =
      \begin{cases}
{1 - {F_{ \Delta }}\left({\rm CP}_0 (\frac{x}{K_{i;j}})\right),} ~~{{\rm{P}}{{\rm{V}}_1} \leq \frac{x}{K_{i;j}}  \leq 1 }\\
1 - {F_{ \Delta }}\left({\rm CP}_0 (\frac{x}{K_{i;j}})\right)
+ \sum\limits_{m = 1}^M {\left( {{F_{ \Delta }}\left({\rm CP}_{m,1}(\frac{x}{K_{i;j}})\right) - {F_{ \Delta }}\left({\rm CP}_{m,2}(\frac{x}{K_{i;j}})\right)} \right)},
~~{{\rm{P}}{{\rm{V}}_{M + 1}} \le \frac{x}{K_{i;j}}  < {\rm{P}}{{\rm{V}}_M}}
\end{cases}
\end{equation}
\hrulefill
\end{figure*}
\setcounter{equation}{\value{TempEqCnt}}

We consider a practical scenario that Eves are uniformly distributed within an angular range $\mathcal{A}_{\rm e} \triangleq [\theta_{\min}, \theta_{\max}]$ and a distance range $\mathcal{D}_{\rm e} \triangleq [{\rm D}_{\min}(\theta_{\rm e}), {\rm D}_{\max}(\theta_{\rm e})]$, where ${\rm D}_{\min}(\theta_{\rm e})$ and ${\rm D}_{\max}(\theta_{\rm e})$ are functions of $\theta_{\rm e} \in \mathcal{A}_{\rm e}$, defining two borders of this area. Throughout this paper, we use
\begin{equation}\label{eq:Rsus}
  \mathcal{R}_{\rm sus} \triangleq \{(\theta_{\rm e}, d_{\rm e})\left.\right| \theta_{\rm e} \in \mathcal{A}_e, d_{\rm e} \in \mathcal{D}_{\rm e}\}
\end{equation}
to define the {\em suspicious area}.
In practice, if Alice has only limited information of ${\rm D}_{\min}(\theta_{\rm e})$ and ${\rm D}_{\max}(\theta_{\rm e})$, she can assume the two boundaries are defined by constant values, $d_{\min}$ and $d_{\max}$.
For instance, if Alice knows nothing about $\mathcal{R}_{\rm sus}$, she can set $\mathcal{A}_{\rm e} = [0, 2\pi]$, $\mathcal{D}_{\rm e} = [0, r_{\rm max}]$, indicating that the suspicious area (from Alice's point of view) spans the entire space with radius $r_{\rm max}$.
The effectiveness of this assumption, referred to as ``constant boundaries'' and defined below, depends on that how accurately it can describe the real $\mathcal{R}_{\rm sus}$.

\begin{definition}[Constant Boundaries]
  To facilitate practical design, it is convenient to set the two boundaries of $\mathcal{R}_{\rm sus}$ to be constants such that
${\rm D}_{\rm min}(\theta_{\rm e}) = d_{\min},~{\rm D}_{\rm max}(\theta_{\rm e}) = d_{\max}, ~\forall \theta_{\rm e}\in\mathcal{A}_{\rm e}$.
\end{definition}

\subsection{Normalized Crosstalk}

For $\mathbf{h}_i$, $\mathbf{h}_j$ ($j \in \mathcal{I}_{\rm r}$) defined as \eqref{eq:h leg}, the following asymptotic results hold as $N_t \to \infty$ \cite{Zhang:Globecom13}:
    \begin{align}
    \label{eq:same channel large}
      \frac{1}{N_t} \mathbf{h}_i^H \mathbf{h}_i &\doteq 1\\
      \label{eq:diff channel large}
      \frac{1}{N_t} \mathbf{h}_i^H \mathbf{h}_j &\doteq \frac{1}{N_t}\sqrt{K_{i;j}} t_{i;j},~j\neq i.
    \end{align}
    where $\doteq$ denotes the approximation that is asymptotically accurate,\footnote{In this paper, we focus on the large antenna regime, and will use equalities instead of approximations for brevity.} $K_{i;j} \triangleq \frac{K_i K_j}{(1 + K_i)(1 + K_j)}$, and $t_{i;j} \triangleq \sum_{n = 0}^{N_t - 1}e^{-j2\pi d_0 (\sin\theta_i - \sin\theta_j) n}$. Stemming from \eqref{eq:diff channel large}, we introduce the following definition.
\begin{definition}[Normalized Crosstalk]
Define the normalized crosstalk between nodes $i,j$ as
  \begin{equation}\label{eq:23.2}
    s_{i;j}(\theta_i, \theta_j) \triangleq \left|\frac{1}{N_t} \mathbf{h}_{i}^H \mathbf{h}_j\right|^2
    \doteq \frac{1}{N_t^2} K_{{ i;j}} |t_{i;j}|^2.
  \end{equation}
\end{definition}

\begin{lemma}\label{le:CT cha}
  The normalized crosstalk $s_{i;j}(\theta_i, \theta_j)$ has the following characteristics:
  \begin{enumerate}
    \item $s_{i;j}(\theta_i, \theta_j)$ is a sinc-like function
    composed of one main lobe and multiple side lobes.
    \item With fixed $\theta_j$ and random $\theta_{i} \sim \mathcal{U}(\theta_{\rm min}, \theta_{\rm max})$ which is uniformly distributed between $\theta_{\rm min}$ and $\theta_{\rm max}$, the CDF of $s_{i;j}(\theta_i, \theta_j)$ can be written as \eqref{eq:19} (see top of this page)
  where the definitions of ${\rm CP}_0$, ${\rm CP}_{m,1}$, ${\rm CP}_{m,2}$ and $F_{\Delta}(\cdot)$ are referred to Appendix \ref{proof:le CT}.
\item With fixed $\theta_j$, the feasible range of $s_{i;j}(\theta_i, \theta_j)$ is $0 \leq s_{i;j} \leq s_{i;j}^{\rm max}$, where $s_{i;j}^{\rm max} \leq K_{i;j}$ is determined by the distribution range of $\theta_i$, i.e., $\mathcal{A}_{i} \triangleq [\theta_{{\rm min}}, \theta_{{\rm max}}]$.
  \end{enumerate}
\end{lemma}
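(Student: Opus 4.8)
The plan is to derive a single closed form for the crosstalk and let it drive all three claims. Writing the array factor $t_{i;j} = \sum_{n=0}^{N_t-1} e^{-j2\pi d_0(\sin\theta_i - \sin\theta_j)n}$ as a finite geometric series and using $|1-e^{-jx}|^2 = 4\sin^2(x/2)$, I would obtain the exact identity $|t_{i;j}|^2 = \sin^2(\pi d_0 N_t \Delta)/\sin^2(\pi d_0 \Delta)$ with $\Delta \triangleq \sin\theta_i - \sin\theta_j$, so that by \eqref{eq:23.2}
\[
  s_{i;j} = K_{i;j}\, f(\Delta), \qquad f(\Delta) \triangleq \frac{1}{N_t^2}\frac{\sin^2(\pi d_0 N_t \Delta)}{\sin^2(\pi d_0 \Delta)},
\]
which is a normalized Fej\'er kernel (the squared magnitude of a Dirichlet kernel). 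For \textbf{Part~1}, a small-angle expansion near $\Delta = 0$ gives $f(0)=1$, the numerator vanishes at $\Delta = k/(d_0 N_t)$ for integer $k \notin N_t\mathbb{Z}$, and between consecutive zeros $f$ attains a local maximum whose envelope decays like $1/(N_t^2\sin^2(\pi d_0\Delta))$. Hence $f$ consists of one main lobe centered at $\Delta=0$ together with a sequence of decaying side lobes, i.e.\ it is sinc-like.

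For \textbf{Part~3}, since $0 \le f(\Delta) \le f(0) = 1$ with equality in the upper bound only at $\Delta = 0$ (equivalently $\sin\theta_i = \sin\theta_j$), we immediately get $0 \le s_{i;j} \le K_{i;j}$. Restricting $\theta_i$ to $[\theta_{\min},\theta_{\max}]$ confines $\Delta$ to a compact interval, so the attainable maximum is $s_{i;j}^{\max} = K_{i;j}\max_{\Delta} f(\Delta) \le K_{i;j}$, with equality exactly when $\sin\theta_j$ lies in the range of $\sin\theta_i$ over $\mathcal{A}_i$ and strict inequality otherwise. This yields $0 \le s_{i;j} \le s_{i;j}^{\max} \le K_{i;j}$, as claimed.

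\textbf{Part~2} is the substantive step. Since $s_{i;j}\le x$ iff $f(\Delta)\le y$ with $y\triangleq x/K_{i;j}$, I would compute the CDF through the complementary super-level set $\{f(\Delta) > y\}$. Because $f$ is sinc-like, this set is a disjoint union of intervals: one around the main lobe, plus one around each side lobe whose peak value ${\rm PV}_m$ exceeds $y$. The ordering ${\rm PV}_{M+1}\le y < {\rm PV}_M$ therefore counts exactly how many side lobes contribute, which is what produces the two cases of \eqref{eq:19}. On each contributing lobe I would invert $f(\Delta)=y$ locally---the map is monotone on each half-lobe, so the inverse is well defined---to recover the interval endpoints, and these local inverses are precisely the critical-point functions ${\rm CP}_0(\cdot)$, ${\rm CP}_{m,1}(\cdot)$, ${\rm CP}_{m,2}(\cdot)$. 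Finally, since $\theta_i \sim \mathcal{U}(\theta_{\min},\theta_{\max})$ induces a distribution on $\Delta = \sin\theta_i - \sin\theta_j$ with CDF $F_\Delta$, the probability of each interval is a difference of $F_\Delta$ evaluated at its endpoints; assembling the main-lobe term with the side-lobe terms and subtracting from $1$ gives \eqref{eq:19}.

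The hard part will be the bookkeeping inside Part~2: carrying out the multi-valued inversion of $f$ consistently across lobes, proving monotonicity on each half-lobe so the endpoints are unambiguous, and matching the sign and index of every $F_\Delta({\rm CP})$ term to its interval (note that the positive side-lobe summands in \eqref{eq:19} require ${\rm CP}_{m,1} < {\rm CP}_{m,2}$ so that each bracket decreases the CDF as $y$ falls below ${\rm PV}_m$). The change of variables from the uniform $\theta_i$ to $\Delta$ through $\sin(\cdot)$---which is non-monotone if $[\theta_{\min},\theta_{\max}]$ straddles $\pm\pi/2$---must also be treated carefully, since it fixes the exact form of $F_\Delta$ and hence the arguments of the critical-point functions, whose explicit expressions I would defer to Appendix~\ref{proof:le CT}.
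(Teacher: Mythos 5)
Your proposal follows essentially the same route as the paper's proof: reduce $s_{i;j}$ to $K_{i;j}$ times the Fej\'er-kernel function of $\Delta = \sin\theta_i - \sin\theta_j$ (the paper cites this closed form rather than deriving it), read off the lobe structure and the bound $f \le f(0) = 1$ for Parts~1 and~3, and obtain the CDF in Part~2 by decomposing the super-level set $\{f > y\}$ into the main-lobe interval and the side-lobe intervals whose peaks ${\rm PV}_m$ exceed $y$, with endpoints given by the cross points and probabilities measured by $F_\Delta$. The bookkeeping issues you flag (sign conventions on $\Delta$, the non-monotone $\sin$ map) are exactly what the paper's explicit $F_\Delta$ expression with its min/max clamping resolves, so the proposal is correct.
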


\begin{proof}
  See Appendix \ref{proof:le CT}.
\end{proof}

\setcounter{equation}{7}

\subsection{Secrecy Transmission Scheme}
 We use linear precoding for data transmission, while AN symbols $s_n$ are sent in the space defined by $\mathbf{v}_n, n = 1,...,N$, to degrade the channels of Eves. For the null space-based jamming \cite{Goel:TWC08}, it holds that $N = N_t - 1$ and $\mathbf{v}_n \in {\rm null}(\mathbf{h}_{\rm b})$. The received signal at receiver $i$ is given by
\begin{equation}\label{eq:leg sig mod}
  y_i = \sqrt{P_{\rm b} d_i^{-\alpha}} \mathbf{h}_i^H \mathbf{w}_{\rm b} x_{\rm b} + \sum\limits_{n = 1}^N \sqrt{\bar{P}_n d_i^{-\alpha}}\mathbf{h}_i^H \mathbf{v}_n s_n + n_i,~\forall i \in \mathcal{R}
\end{equation}
where $\mathbf{w}_{\rm b} \in \mathbb{C}^{N_t \times 1}$ is the precoder for Bob, $x_{\rm b}$ is the unit-norm data symbol, and $n_i$ is the additive Gaussian noise. Moreover, $P_{\rm b}$ and $\bar{P}_n$ respectively are the powers allocated to Bob and the $n$-th jamming direction, with total power constraint such as $\sum_{n = 1}^N \bar{P}_n = P_{\rm tot} - P_{\rm b}$
where $P_{\rm tot}$ is the total available transmit power. We define the jamming power allocation coefficient as
\begin{equation}\label{eq:phi def}
  \phi \triangleq \frac{P_{\rm jam}}{P_{\rm tot}} = \frac{\sum_{n = 1}^N \bar{P}_n}{P_{\rm tot}}.
\end{equation}

For ease of description, we assume that Bob and all Eves share the same noise covariance being $N_0$. Moreover,
we consider maximum ratio transmission (MRT) for precoding of the data symbol $x_{\rm b}$, i.e.,
$\mathbf{w}_b = \frac{\mathbf{h}_b}{||\mathbf{h}_b||}$.
In this case,
according to \eqref{eq:leg sig mod}, the SINR at receiver $i$ is given by
\begin{equation}\label{eq:7}
  {\rm SINR}_i = \frac{P_{\rm b}d^{-\alpha}_i|\mathbf{h}_i^H \mathbf{w}_{\rm b}|^2}{N_0 + d^{-\alpha}_i \sum_{n = 1}^{N}\bar{P}_n|\mathbf{h}_i^H \mathbf{v}_n|^2}.
\end{equation}
We assume that the Eves are not colluding, but consider the most-capable Eve, which has the maximum receive SINR, to define the secrecy rate as \cite{Geraci:TWC14}
\begin{equation}\label{eq:sec rate}
  R_{\rm b}^{\rm s} = \left[\log_2(1 + {\rm SINR}_{\rm b}) - \log_2(1 + {\rm SINR}_{{\rm e},\max})\right]^+
\end{equation}
where ${\rm SINR}_{{\rm e},\max} \triangleq \max\limits_{l}{\rm SINR}_{{\rm e}_l}$ and $[x]^+ \triangleq \max\{x,0\}$. We say a secrecy outage occurs if $R_{\rm b}^{\rm s}$ is less than a target rate $R_{\rm th}$, hence the SOP is defined as
\begin{equation}\label{eq:SOP}
  {\mathcal P}_{\rm out} = \Pr\{R^{\rm s}_{\rm b} < R_{\rm th}\}.
\end{equation}

\section{Secrecy Outage Analysis}

In this section,
we first introduce the secrecy outage region (SOR) which describes all possible locations of Eves who can cause secrecy outage. Analytical expression of the SOR is derived for uniform jamming, then the SOP is studied with variant shapes of $\mathcal{R}_{\rm sus}$. At last, a jamming-beneficial range is derived to show that uniform jamming is still useful in reducing the SOP.

\setcounter{TempEqCnt}{\value{equation}}
\setcounter{equation}{21}
\begin{figure*}
\begin{equation}
  \label{eq:43.1}
  C_3(\phi) = \frac{C_2 (\phi)}{C_1 (\phi)} = \frac{\left({{1 + (1 - \phi)\tilde{P}_{\rm tot}d^{-\alpha}_b N_t} - 2^{R_{\rm th}}}\right)\tilde{P}_{\rm tot}\phi}{\left({{1 + (1 - \phi)\tilde{P}_{\rm tot}d^{-\alpha}_b N_t} - 2^{R_{\rm th}}}\right)\tilde{P}_{\rm tot}\phi + (1 - \phi)\tilde{P}_{\rm tot}N_t 2^{R_{\rm th}}}.
\end{equation}
\hrulefill
\end{figure*}
\setcounter{equation}{\value{TempEqCnt}}

\subsection{Secrecy Outage Region}
In the large antenna regime, all fast fading effects are completely averaged out as shown in \eqref{eq:same channel large} and \eqref{eq:diff channel large}. Therefore, whether the secrecy outage occurs or not, will be essentially determined by the geometric location of Eve.
In this light, we introduce the SOR defined in the following.
\begin{definition}[Secrecy Outage Region]\label{def:SOR}
  The SOR is defined in terms of polar coordinates as
  \begin{equation}\label{eq:def SOR}
    \mathcal{R}_{\rm SOR} \triangleq \left\{\left(\theta_{\rm e}, d_{\rm e}\right)\left.\right| \lim\limits_{N_t \to \infty}R_{\rm b}^{\rm s} < R_{\rm th}\right\}.
  \end{equation}
  Herein, we note that $R_{\rm b}^{\rm s}$ is a function of $\theta_{\rm e}$ and $d_{\rm e}$.
\end{definition}

In the large antenna regime, secrecy outage occurs if there exists at least one Eve within the SOR. If {\em all} Eves locate outside of the SOR, the target secrecy rate $R_{\rm th}$ can be guaranteed.
To characterize the SOR, we first evaluate the received SINRs assuming uniform jamming.

\begin{lemma}\label{le:SINR uj}
  With uniform jamming in ${\rm null}(\mathbf{h}_{\rm b})$, the SINRs at Bob and any Eve can be, respectively, written as\footnote{Hereafter, if one notation is applied for {\em any} Eve, we will use the subscript ${\rm e}$ instead of ${\rm e}_l$ for the sake of brevity.}
  \begin{align}
  \label{eq:29}
    {\rm SINR}_{\rm b}^{\rm uj} &= \tilde{P}_{\rm b} d^{-\alpha}_{\rm b} N_t\\
    \label{eq:30}
    {\rm SINR}_{{\rm e}}^{\rm uj} &=\frac{\tilde{P}_{\rm b}d^{-\alpha}_{{\rm e}}N_t s_{{\rm e};{\rm b}}(\theta_{\rm e})}{1 + d^{-\alpha}_{{\rm e}} \tilde{P}_{\rm jam} (1 - s_{{\rm e};{\rm b}}(\theta_{\rm e}))}
  \end{align}
where, and hereafter, we use the notations $\tilde{P}_{\rm b} = \frac{P_{\rm b}}{N_0}, ~\tilde{P}_{\rm jam} = \frac{P_{\rm jam}}{N_0}$ for brevity.
Note that in \eqref{eq:30}, $s_{\rm e;b}(\theta_{\rm e})$ is the normalized crosstalk between Eve and Bob as defined in \eqref{eq:23.2}.
Considering fixed $\theta_{\rm b}$, we hereafter write $s_{\rm e;b}$ as a function of only $\theta_{\rm e}$.
\end{lemma}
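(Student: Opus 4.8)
The plan is to start from the general per-receiver SINR expression \eqref{eq:7} and specialize it to MRT precoding $\mathbf{w}_{\rm b} = \mathbf{h}_{\rm b}/\|\mathbf{h}_{\rm b}\|$ together with uniform null-space jamming, so that $\bar{P}_n = P_{\rm jam}/(N_t - 1)$ for every $n$ and $\{\mathbf{v}_n\}_{n=1}^{N_t - 1}$ is an orthonormal basis of ${\rm null}(\mathbf{h}_{\rm b})$. The two asymptotic identities \eqref{eq:same channel large}--\eqref{eq:diff channel large}, combined with the normalized-crosstalk definition \eqref{eq:23.2}, supply every limiting quantity I need, namely $\|\mathbf{h}_i\|^2 \doteq N_t$ and $|\mathbf{h}_i^H \mathbf{h}_j|^2 \doteq N_t^2 s_{i;j}$.

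First I would handle Bob. Since $\mathbf{v}_n \in {\rm null}(\mathbf{h}_{\rm b})$, we have $\mathbf{h}_{\rm b}^H \mathbf{v}_n = 0$, so the entire jamming sum in the denominator of \eqref{eq:7} vanishes and the denominator collapses to $N_0$. In the numerator, MRT gives $|\mathbf{h}_{\rm b}^H \mathbf{w}_{\rm b}|^2 = \|\mathbf{h}_{\rm b}\|^2 \doteq N_t$, and dividing through by $N_0$ yields \eqref{eq:29} directly.

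The Eve SINR is the substantive part. Its numerator is immediate: $|\mathbf{h}_{\rm e}^H \mathbf{w}_{\rm b}|^2 = |\mathbf{h}_{\rm e}^H \mathbf{h}_{\rm b}|^2 / \|\mathbf{h}_{\rm b}\|^2 \doteq (N_t^2 s_{\rm e;b})/N_t = N_t s_{\rm e;b}$, giving $P_{\rm b} d_{\rm e}^{-\alpha} N_t s_{\rm e;b}$. The key step, and what I expect to be the main obstacle, is evaluating the aggregate jamming leakage $\sum_{n=1}^{N_t-1} |\mathbf{h}_{\rm e}^H \mathbf{v}_n|^2$ onto Eve \emph{without} knowing the individual jamming directions $\mathbf{v}_n$. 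The trick I would use is that $\{\mathbf{v}_1, \dots, \mathbf{v}_{N_t-1}\}$ augmented by the unit vector $\mathbf{h}_{\rm b}/\|\mathbf{h}_{\rm b}\|$ forms an orthonormal basis of $\mathbb{C}^{N_t \times 1}$. Parseval's identity applied to $\mathbf{h}_{\rm e}$ then gives
$$\sum_{n=1}^{N_t-1} |\mathbf{h}_{\rm e}^H \mathbf{v}_n|^2 = \|\mathbf{h}_{\rm e}\|^2 - \frac{|\mathbf{h}_{\rm e}^H \mathbf{h}_{\rm b}|^2}{\|\mathbf{h}_{\rm b}\|^2} \doteq N_t - N_t s_{\rm e;b} = N_t(1 - s_{\rm e;b}),$$
so the jamming leakage is exactly the fraction of Eve's channel energy orthogonal to Bob's direction. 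This is the conceptual heart of the lemma: as $N_t \to \infty$ the crosstalk $s_{\rm e;b}$ governs both how much signal Eve captures and how much of the null-space jamming she avoids.

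Finally I would assemble the denominator as $N_0 + d_{\rm e}^{-\alpha} \cdot \frac{P_{\rm jam}}{N_t - 1} \cdot N_t(1 - s_{\rm e;b})$, let $N_t/(N_t-1) \to 1$ in the large-antenna regime, and divide both numerator and denominator by $N_0$. This produces precisely \eqref{eq:30} with $\tilde{P}_{\rm b} = P_{\rm b}/N_0$ and $\tilde{P}_{\rm jam} = P_{\rm jam}/N_0$. Apart from the orthonormal-completion/Parseval argument for the leakage term, every remaining step is a routine substitution of the stated asymptotics, so I expect no further difficulty.
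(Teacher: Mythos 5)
Your proposal is correct and follows essentially the same route as the paper: jamming in ${\rm null}(\mathbf{h}_{\rm b})$ kills the interference at Bob, the asymptotics \eqref{eq:same channel large}--\eqref{eq:diff channel large} give the numerators, and the orthonormal-completion (Parseval) identity $\sum_n |\mathbf{h}_{\rm e}^H\mathbf{v}_n|^2 = \|\mathbf{h}_{\rm e}\|^2 - |\mathbf{h}_{\rm e}^H\mathbf{w}_{\rm b}|^2 = N_t(1-s_{\rm e;b})$ is exactly the paper's key step for the denominator. Your explicit remark that $N_t/(N_t-1)\to 1$ is a detail the paper glosses over, but otherwise the arguments coincide.
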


\begin{proof}
Since ${\rm span}(\mathbf{v}_n) = {\rm null}(\mathbf{h}_{\rm b})$, jamming causes no interference at Bob. Applying \eqref{eq:same channel large} to \eqref{eq:7}, we get \eqref{eq:29}.
  Noting that $N = N_t - 1$ and $\bar{P}_n = \frac{P_{\rm jam}}{N_t - 1}$, from \eqref{eq:7}, we have
  \begin{equation}\label{eq:31}
    {\rm SINR}_{{\rm e}}^{\rm uj} = \frac{P_{\rm b}d^{-\alpha}_{{\rm e}}|\mathbf{h}_{{\rm e}}^H \mathbf{w}_{\rm b}|^2}{N_0 + d^{-\alpha}_{{\rm e}} \frac{P_{\rm jam}}{N_t - 1} \sum\limits_{n = 1}^{N_t - 1}|\mathbf{h}_{{\rm e}}^H \mathbf{w}_n|^2}.
  \end{equation}
 By applying \eqref{eq:23.2}, the numerator of \eqref{eq:31} can be written as $P_{\rm b}d^{-\alpha}_{\rm e}N_t s_{\rm e;b}(\theta_{\rm e})$. On the other hand,
  noting that $\mathbf{w}_{\rm b}$ and $\mathbf{v}_n, n = 1,..., N_t - 1$ constitute a complete orthogonal basis of the $N_t$-dimensional vector space, we have $\sum_{n = 1}^{N_t - 1}|\mathbf{h}_{{\rm e}} \mathbf{w}_n|^2 {= ||\mathbf{h}_{{\rm e}}||^2 - |\mathbf{h}_{{\rm e}} \mathbf{w}_{\rm b}|^2} = N_t (1 - s_{{\rm e};{\rm b}}(\theta_{\rm e}))$ in the denominator of \eqref{eq:31}. Therefore, \eqref{eq:30} can be obtained.
\end{proof}

\begin{remark}
The result in \eqref{eq:29} leads to a constraint on $\phi$ (defined in \eqref{eq:phi def}), written as
\begin{equation}\label{eq:22.1}
  \phi \leq \phi^{\max} = 1 - \frac{2^{R_{\rm th}} - 1}{\tilde{P}_{\rm tot}d^{-\alpha}_{\rm b}N_t}
\end{equation}
which stems from the fact that the jamming power cannot be too large, otherwise, even without Eves, the target rate $R_{\rm th}$ cannot be guaranteed since the remained signaling power is too small. Unless otherwise specified, we assume \eqref{eq:22.1} can always hold via proper power allocation.
\end{remark}

From Lemma \ref{le:SINR uj}, we characterize the SOR for the uniform jamming as follows.
\begin{prop}\label{prop:sor}
With uniform jamming in ${\rm null}(\mathbf{h}_{\rm b})$ and given $\phi$, the SOR is described as
\begin{equation}\label{eq:38}
  \mathcal{R}_{\rm SOR}^{\rm uj}(\phi) = \Bigg\{(\theta_{\rm e}, d_{\rm e})\left.\right|d_{\rm e} < \bar{d}_{\rm e}^{\rm uj}(\phi, \theta_{\rm e}),
  s_{\rm e;b}\left(\theta_{\rm e}\right) > C_3(\phi)\Bigg\}
\end{equation}
where
\begin{align}\label{eq:39.1}
  \bar{d}_{\rm e}^{\rm uj}(\phi, \theta_{\rm e}) &= \left(C_1(\phi)s_{\rm e;b}\left(\theta_{\rm e}\right) - C_2(\phi)\right)^{\frac{1}{\alpha}}\\
\label{eq:40.1}
  C_1(\phi) &= \frac{(1 - \phi)\tilde{P}_{\rm tot}N_t 2^{R_{\rm th}}}{{1 + (1 - \phi)\tilde{P}_{\rm tot}d^{-\alpha}_b N_t} - 2^{R_{\rm th}}} +  \tilde{P}_{\rm tot} \phi,\\
  \label{eq:40.2}
  C_2 (\phi) &=  \tilde{P}_{\rm tot} \phi
\end{align}
and $C_3(\phi)$ is given by \eqref{eq:43.1} shown at the top of this page.
\end{prop}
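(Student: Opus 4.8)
The plan is to turn the probabilistic outage event $R_{\rm b}^{\rm s} < R_{\rm th}$ into a purely geometric inequality on the Eve coordinates $(\theta_{\rm e}, d_{\rm e})$, using the fact that in the large-$N_t$ limit the fast-fading has been averaged out and the SINRs of Lemma~\ref{le:SINR uj} are deterministic functions of location. First I would dispose of the positive-part operator in \eqref{eq:sec rate}: if the raw difference $\log_2(1+{\rm SINR}_{\rm b}) - \log_2(1+{\rm SINR}_{\rm e})$ is nonnegative and below $R_{\rm th}$ then $R_{\rm b}^{\rm s}$ equals it and outage holds, while if it is negative then $R_{\rm b}^{\rm s}=0<R_{\rm th}$ and outage also holds; conversely a difference $\ge R_{\rm th}>0$ gives no outage. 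Hence outage is \emph{equivalent} to $(1+{\rm SINR}_{\rm b})/(1+{\rm SINR}_{\rm e}) < 2^{R_{\rm th}}$, and since the SOR tracks a single Eve I replace ${\rm SINR}_{{\rm e},\max}$ by ${\rm SINR}_{\rm e}$.

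Next I would substitute the closed forms \eqref{eq:29}--\eqref{eq:30}, writing $\tilde P_{\rm b}=(1-\phi)\tilde P_{\rm tot}$ and $\tilde P_{\rm jam}=\phi\tilde P_{\rm tot}$, and abbreviate $s \triangleq s_{\rm e;b}(\theta_{\rm e})$ and $u \triangleq d_{\rm e}^{-\alpha}$. Clearing the strictly positive denominator $1+\phi\tilde P_{\rm tot}(1-s)\,u$ of ${\rm SINR}_{\rm e}$ reduces the outage condition to an affine inequality in $u$, namely $B + u\,Q < 0$, where $B \triangleq 1+{\rm SINR}_{\rm b}^{\rm uj}-2^{R_{\rm th}}$ and $Q \triangleq \phi\tilde P_{\rm tot}(1-s)B - 2^{R_{\rm th}}(1-\phi)\tilde P_{\rm tot}N_t\,s$ is the collected coefficient of $u$. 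This collecting of terms is the only genuinely computational step and must be carried out with care to keep every sign straight.

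The decisive step is the sign analysis. Under the feasibility constraint \eqref{eq:22.1} ($\phi\le\phi^{\max}$) one has ${\rm SINR}_{\rm b}^{\rm uj}\ge 2^{R_{\rm th}}-1$, hence $B\ge 0$ (with $B>0$ for $\phi<\phi^{\max}$); therefore $B+uQ<0$ for some $u>0$ \emph{forces} $Q<0$. Dividing by $Q$ flips the inequality and gives a lower bound $u > B/(-Q)$; rewriting $-Q$ so that $s$ appears linearly and dividing numerator and denominator by $B$ turns this into $u > 1/\!\left(C_1(\phi)s - C_2(\phi)\right)$, i.e. $d_{\rm e} < \left(C_1(\phi)s-C_2(\phi)\right)^{1/\alpha}=\bar d_{\rm e}^{\rm uj}(\phi,\theta_{\rm e})$, with $C_1(\phi)$ and $C_2(\phi)$ read off as \eqref{eq:40.1}--\eqref{eq:40.2}. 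Finally, the requirement $Q<0$ is exactly $s > C_2(\phi)/C_1(\phi)$, which both reproduces the threshold $C_3(\phi)$ of \eqref{eq:43.1} and simultaneously guarantees $C_1(\phi)s-C_2(\phi)>0$ so that $\bar d_{\rm e}^{\rm uj}$ is real and positive; when $s\le C_3(\phi)$ instead $Q\ge 0$ and $B+uQ\ge 0$ for every $u>0$, so no $d_{\rm e}$ can trigger outage. Conjoining the two surviving conditions yields precisely \eqref{eq:38}. The main obstacle is therefore not conceptual but bookkeeping: tracking the direction of each inequality through the cross-multiplication and the division by $Q$, and recognizing that the positivity of $\bar d_{\rm e}^{\rm uj}$ and the angular threshold $C_3(\phi)$ are one and the same condition $s>C_2/C_1$.
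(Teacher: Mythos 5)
Your proposal is correct and follows essentially the same route as the paper's proof: substitute the asymptotic SINRs of Lemma~\ref{le:SINR uj} into the outage condition $R_{\rm b}^{\rm s}<R_{\rm th}$, solve the resulting inequality for $d_{\rm e}$ to read off $\bar d_{\rm e}^{\rm uj}$, $C_1$, $C_2$, and obtain the angular threshold $C_3=C_2/C_1$ from the requirement that the distance bound be positive (your condition $Q<0$ is exactly the paper's $C_1 s_{\rm e;b}-C_2>0$). Your version merely makes explicit the sign bookkeeping and the handling of the $[\cdot]^+$ operator that the paper leaves implicit.
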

\begin{proof}
  Substituting \eqref{eq:29} and \eqref{eq:30} into \eqref{eq:sec rate} and using the definition of SOR in \eqref{eq:def SOR}, we can obtain the value of $\bar{d}_{\rm e}^{\rm uj}(\phi, \theta_{\rm e})$ in \eqref{eq:39.1}. Note that the value of $\bar{d}_{\rm  e}^{\rm uj}(\phi, \theta_{\rm e})$ should be positive, this straightforwardly introduces the constraint on the minimum value of $s_{\rm e;b}(\theta_{\rm e})$ such as $s_{\rm e;b}\left(\theta_{\rm e}\right) > C_3(\phi)$ where $C_3(\phi)$ can be readily obtained by letting $C_1(\phi)s_{\rm e;b}(\theta_{\rm e}) - C_2(\phi) > 0$.
\end{proof}

\setcounter{equation}{22}

From Lemma \ref{le:CT cha}, $s_{\rm e;b}(\theta_{\rm e})$ in \eqref{eq:39.1} is a function  with one main lobe and multiple side lobes, resulting in a multi-lobe shaped SOR.
In order to gain some insights from this complex shape (as will be shown in the simulations), we focus on several critical security-related metrics as
\begin{itemize}
  \item {\em Largest radius of the main lobe ($\bar{d}_{{\rm e},0}^{\rm uj}$) and the $m$-th side lobe ($\bar{d}_{{\rm e},m}^{\rm uj}$):}
  $\bar{d}_{{\rm e},0}^{\rm uj}$ and $\bar{d}_{{\rm e},m}^{\rm uj}$ can be obtained by replacing $s_{\rm e;b}(\theta_{\rm e})$ in \eqref{eq:39.1} respectively with $K_{\rm e;b}$ and $K_{\rm e;b}{\rm PV}_{m}$ (defined as \eqref{eq:peak value} in the proof of Lemma \ref{le:CT cha}), such that
  \begin{align}\label{eq:43.2}
    \bar{d}_{{\rm e},0}^{\rm uj} &\triangleq \left(C_1(\phi)K_{\rm e;b} - C_2(\phi)\right)^{\frac{1}{\alpha}}\\
    \label{eq:radius side lobe}
    \bar{d}_{{\rm e},m}^{\rm uj} &\triangleq \left(C_1(\phi)K_{\rm e;b}{\rm PV}_{m} - C_2(\phi)\right)^{\frac{1}{\alpha}}.
  \end{align}
  Since $\bar{d}_{{\rm e},0}^{\rm uj}$ is much larger than $\bar{d}_{{\rm e},m}^{\rm uj}, \forall m \neq 0$, $\bar{d}_{{\rm e},0}^{\rm uj}$ can be considered as the {\em largest distance} of the SOR.
  For any Eve whose distance to Alice is larger than $\bar{d}_{{\rm e},0}^{\rm uj}$, we can conclude that it causes no secrecy outage regardless of its LOS direction.

  \item {\em Largest angle difference $\Delta \theta_{\max}$ of the SOR:} For any Eve whose angle difference to the LOS direction of Bob is larger than $\Delta \theta_{\max}$, we can conclude that it causes no secrecy outage regardless of its distance to Alice. If $s_{\rm e;b}(\theta_{\rm e}) < C_3(\phi), \forall \theta_{\rm e} > \hat{\theta}_{\rm e}$, we can write
      \begin{equation}\label{eq:deltatheta}
        \Delta\theta_{\max} = \left|\hat{\theta}_{\rm e} - \theta_{\rm b}\right|.
      \end{equation}
\end{itemize}
Clearly, to have smaller SOR, we expect to reduce both $\bar{d}_{{\rm e},0}^{\rm uj}$ and $\Delta \theta_{\max}$. To minimize $\bar{d}_{{\rm e},0}^{\rm uj}$ in \eqref{eq:43.2}, we need to minimize $C_1(\phi)$ while maximizing $C_2(\phi)$; however, since $C_3(\phi) = \frac{C_2(\phi)}{C_1(\phi)}$, this results in a maximized $C_3(\phi)$ (correspondingly, a larger $\Delta\theta_{\max}$), which is not desired. It is clear that a trade-off in $\phi$ exists in balancing the effects of both $\bar{d}_{{\rm e},0}^{\rm uj}$ and $\Delta \theta_{\max}$, which can be formulated as jamming power allocation problems, as described in the following sections.

At last, by setting $\phi = 0$ in Proposition \ref{prop:sor}, we obtain the following corollary:
\begin{corollary}\label{coro:nj sor}
Without jamming, i.e., $\phi = 0$, the SOR can be found as
\begin{equation}\label{eq:sor nj}
  \mathcal{R}_{\rm SOR}^{\rm nj} = \left\{(\theta_{\rm e}, d_{\rm e})\left.\right|d_{\rm e} < \left(\frac{\tilde{P}_{\rm tot}N_t 2^{R_{\rm th}} s_{\rm e;b}\left(\theta_{\rm e}\right)}{{1 + \tilde{P}_{\rm tot}d^{-\alpha}_b N_t} - 2^{R_{\rm th}}}\right)^{\frac{1}{\alpha}}\right\}
\end{equation}
where the superscript $(\cdot)^{\rm nj}$ stands for ``no jamming''.
\end{corollary}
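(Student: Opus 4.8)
The plan is to obtain the corollary as a direct specialization of Proposition~\ref{prop:sor}, evaluating the constants $C_1(\phi)$, $C_2(\phi)$, and $C_3(\phi)$ at $\phi = 0$. First I would substitute $\phi = 0$ into \eqref{eq:40.2} to get $C_2(0) = 0$, and into \eqref{eq:40.1}, where the additive term $\tilde{P}_{\rm tot}\phi$ vanishes while the leading fraction keeps its full numerator and denominator, yielding
\begin{equation}
  C_1(0) = \frac{\tilde{P}_{\rm tot}N_t 2^{R_{\rm th}}}{1 + \tilde{P}_{\rm tot}d^{-\alpha}_b N_t - 2^{R_{\rm th}}}. \nonumber
\end{equation}

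Next I would revisit the two defining inequalities of $\mathcal{R}_{\rm SOR}^{\rm uj}(\phi)$ in \eqref{eq:38}. Because $C_3(\phi) = C_2(\phi)/C_1(\phi)$ and $C_2(0) = 0$, the threshold collapses to $C_3(0) = 0$, so the angular condition $s_{\rm e;b}(\theta_{\rm e}) > C_3(\phi)$ degenerates to $s_{\rm e;b}(\theta_{\rm e}) > 0$. By part~1) of Lemma~\ref{le:CT cha} this holds for all $\theta_{\rm e}$ outside the isolated nulls of the sinc-like crosstalk; moreover, at those nulls the radius below vanishes anyway, so the angular constraint can simply be dropped without changing the region. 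What remains is the single distance bound $d_{\rm e} < \bar{d}_{\rm e}^{\rm uj}(0, \theta_{\rm e})$.

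Substituting $C_1(0)$ and $C_2(0) = 0$ into \eqref{eq:39.1} then gives
\begin{equation}
  \bar{d}_{\rm e}^{\rm uj}(0, \theta_{\rm e}) = \left(C_1(0)\, s_{\rm e;b}(\theta_{\rm e})\right)^{1/\alpha} = \left(\frac{\tilde{P}_{\rm tot}N_t 2^{R_{\rm th}} s_{\rm e;b}(\theta_{\rm e})}{1 + \tilde{P}_{\rm tot}d^{-\alpha}_b N_t - 2^{R_{\rm th}}}\right)^{1/\alpha}, \nonumber
\end{equation}
which coincides with the expression stated in \eqref{eq:sor nj}, completing the derivation.

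Since the result is pure substitution, there is no genuine analytic obstacle; the only point that needs care is well-definedness of the radius. The radicand is nonnegative precisely when $1 + \tilde{P}_{\rm tot}d^{-\alpha}_b N_t - 2^{R_{\rm th}} > 0$, and I would note that this is exactly the feasibility condition implied by the standing assumption that \eqref{eq:22.1} holds (equivalently, $\phi^{\max} \ge 0$, which is what makes the no-jamming choice $\phi = 0$ admissible). This guarantees that $\bar{d}_{\rm e}^{\rm uj}(0,\theta_{\rm e})$ is a real, nonnegative radius for every $\theta_{\rm e}$, so the no-jamming SOR is the multi-lobe shaped region traced out by the crosstalk lobes.
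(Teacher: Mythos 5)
Your proposal is correct and follows exactly the paper's route: the paper's own proof is the one-line observation that the corollary is obtained by setting $\phi=0$ in Proposition~\ref{prop:sor}, which is precisely the substitution $C_2(0)=0$, $C_3(0)=0$, and $\bar{d}_{\rm e}^{\rm uj}(0,\theta_{\rm e})=(C_1(0)s_{\rm e;b}(\theta_{\rm e}))^{1/\alpha}$ that you carry out. Your added remarks on the vanishing angular constraint and on positivity of the radicand via \eqref{eq:22.1} are consistent with the paper's surrounding discussion and do not change the argument.
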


\begin{proof}
  The corollary is directly obtained by setting $\phi = 0$ in Proposition \ref{prop:sor}.
\end{proof}

Differently from \eqref{eq:38}, the constraint on $s_{\rm e;b}(\theta_{\rm e})$ vanishes in \eqref{eq:sor nj}, indicating that the SOR now is extended to the entire angular domain. Moreover, compared with \eqref{eq:43.2}, we see that $\bar{d}_{{\rm e},0}$ in no-jamming case,\footnote{Hereafter, for the conditions where the corresponding notation can be applied for either the uniform-jamming or no-jamming cases, we ignore the superscript ``uj'' or ``nj'' for brevity.} on the contrary, is reduced compared to uniform jamming. In conclusion, uniform jamming induces two opposite effects: the beneficial one is that the SOR can be squeezed in angular domain, and the disadvantage is that the SOR is enlarged in Bob's direction, i.e., the main lobe. Illustration of the SOR changing caused by jamming will be shown later in simulations.

\subsection{SOP Analysis}
With a single Eve uniformly distributed in $\mathcal{R}_{\rm sus}$, using the derived SOR, the SOP is given by
\begin{equation}\label{eq:area Pout}
  \mathcal{P}_{\rm out,singleEve} = \frac{{\rm Area}\left(\mathcal{R}_{\rm SOR}(\phi)  \cap \mathcal{R}_{\rm sus}\right)}{{\rm Area}\left(\mathcal{R}_{\rm sus}\right)}
\end{equation}
where ${\rm Area}(\cdot)$ denotes the area of a certain geometric region. Considering that there are $L$ Eves uniformly distributed in $\mathcal{R}_{\rm sus}$, the SOP of the entire network can be written as
\begin{equation}\label{eq:general Pout}
  \mathcal{P}_{\rm out} = 1 - \left(1 - \mathcal{P}_{\rm out,singleEve}\right)^L.
\end{equation}
 From \eqref{eq:area Pout}, the SOP is determined by the overlapping area between two geometrical regions. If $\mathcal{R}_{\rm SOR}(\phi)  \cap \mathcal{R}_{\rm sus} = \emptyset $, zero SOP is achieved. Recalling \eqref{eq:Rsus}, as well as \eqref{eq:43.2} and \eqref{eq:deltatheta}, two sufficient conditions of $\mathcal{R}_{\rm SOR}(\phi)  \cap \mathcal{R}_{\rm sus} = \emptyset$ can be written as
\begin{align}\label{eq:34.2}
  &{\rm D}_{\min}(\theta_{\rm e}) > \bar{d}_{{\rm e},0},~\forall \theta_{\rm e},\nonumber\\
  &{\rm or}~
  {\rm D}_{\max}(\theta_{\rm e}) = 0, ~\forall \left|\theta_{\rm e} - \theta_{\rm b}\right| < \Delta\theta_{\max}
\end{align}
where $\bar{d}_{{\rm e},0}$ is defined in \eqref{eq:43.2} and $\Delta\theta_{\max}$ is in \eqref{eq:deltatheta}. The physical insight of \eqref{eq:34.2} is clear: when an Eve is far away or  its angle difference to $\theta_{\rm b}$ is large, it does not cause outage.

For the general case with arbitrary shape of $\mathcal{R}_{\rm sus}$, $\mathcal{P}_{\rm out}$ in \eqref{eq:area Pout} can be numerically evaluated and further applied to jamming power allocation design. However, due to the non-regular shapes of $\mathcal{R}_{\rm SOR}(\phi)$ and $\mathcal{R}_{\rm sus}$, closed-form expressions of ${\rm Area}\left(\mathcal{R}_{\rm SOR}(\phi) \cap \mathcal{R}_{\rm sus}\right)$ as well as the SOP in \eqref{eq:general Pout} do not exist for the general case.
Yet, by considering constant boundaries of $\mathcal{R}_{\rm sus}$ as described in Definition 1, \eqref{eq:area Pout} can be written in an integral form as the following proposition.

\begin{prop}\label{prop:SOP}
  With constant boundaries of $\mathcal{R}_{\rm sus}$, i.e., $\mathcal{A}_{\rm e} = [\theta_{\min}, \theta_{\max}]$ and $\mathcal{D}_{\rm e} = [d_{\rm min}, d_{\rm max}]$, and uniform jamming in ${\rm null}(\mathbf{h}_{\rm b})$, the SOP can be given as
  \begin{multline}\label{eq:44}
    \mathcal{P}_{\rm out} = 1 -
    \Bigg\{\int_{d_{\min}}^{d_{\max}} F_{s_{{\rm e};{\rm b}}}\left(\frac{z^{\alpha} + \tilde{P}_{\rm jam}}{\frac{\tilde{P}_{\rm b} N_t 2^{R_{\rm th}}}{{1 + \tilde{P}_{\rm b} d^{-\alpha}_b N_t} - 2^{R_{\rm th}}} + \tilde{P}_{\rm jam}}\right)\\
    \times \frac{2z}{d_{\max}^2 - d_{\min}^2} {\rm d}z\Bigg\}^L.
  \end{multline}
  where $F_{s_{{\rm e};{\rm b}}}(\cdot)$ is defined in \eqref{eq:19}.
\end{prop}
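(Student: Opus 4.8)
The plan is to collapse the $L$-Eve formula \eqref{eq:general Pout} onto a single-Eve calculation: since $\mathcal{P}_{\rm out}=1-(1-\mathcal{P}_{\rm out,singleEve})^L$, it suffices to show that the single-Eve non-outage probability equals the integral sitting inside the braces of \eqref{eq:44}, i.e.
\[
1-\mathcal{P}_{\rm out,singleEve}
=\int_{d_{\min}}^{d_{\max}} F_{s_{\rm e;b}}\!\left(\frac{z^{\alpha}+C_2(\phi)}{C_1(\phi)}\right)\frac{2z}{d_{\max}^2-d_{\min}^2}\,{\rm d}z,
\]
with $C_1(\phi),C_2(\phi)$ as in \eqref{eq:40.1}--\eqref{eq:40.2}. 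Raising to the $L$-th power and substituting $C_2(\phi)=\tilde{P}_{\rm jam}$ together with the explicit form of $C_1(\phi)$ (using $\tilde{P}_{\rm jam}=\phi\tilde{P}_{\rm tot}$ and $\tilde{P}_{\rm b}=(1-\phi)\tilde{P}_{\rm tot}$) then reproduces \eqref{eq:44} verbatim, so the whole task reduces to establishing that one identity.

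The observation that unlocks the calculation is that a point distributed uniformly \emph{by area} over the constant-boundary region $\mathcal{R}_{\rm sus}$ has, in polar coordinates, the joint density $d_{\rm e}/\mathrm{Area}(\mathcal{R}_{\rm sus})$, owing to the Jacobian factor $d_{\rm e}$ in ${\rm d}A=d_{\rm e}\,{\rm d}d_{\rm e}\,{\rm d}\theta_{\rm e}$. For constant boundaries $\mathrm{Area}(\mathcal{R}_{\rm sus})=(\theta_{\max}-\theta_{\min})(d_{\max}^2-d_{\min}^2)/2$, so this density \emph{factorizes}: $\theta_{\rm e}$ and $d_{\rm e}$ become independent, with $\theta_{\rm e}\sim\mathcal{U}(\theta_{\min},\theta_{\max})$ and $d_{\rm e}$ carrying the radial density $2d_{\rm e}/(d_{\max}^2-d_{\min}^2)$. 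First I would make this factorization explicit, since it is exactly what allows the angular and radial randomness to be handled separately.

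Next I would invoke the SOR description of Proposition~\ref{prop:sor}. Conditioning on $d_{\rm e}=z$, the membership condition $d_{\rm e}<\bar d_{\rm e}^{\rm uj}$ from \eqref{eq:39.1} rearranges to $z^{\alpha}<C_1(\phi)s_{\rm e;b}(\theta_{\rm e})-C_2(\phi)$, equivalently $s_{\rm e;b}(\theta_{\rm e})>\big(z^{\alpha}+C_2(\phi)\big)/C_1(\phi)$; because this threshold already exceeds $C_3(\phi)=C_2(\phi)/C_1(\phi)$ whenever $z>0$, the side constraint $s_{\rm e;b}>C_3(\phi)$ in \eqref{eq:38} is automatically subsumed and need not be carried along. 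Hence, given $d_{\rm e}=z$, the Eve lies \emph{outside} the SOR exactly when $s_{\rm e;b}(\theta_{\rm e})\le\big(z^{\alpha}+C_2(\phi)\big)/C_1(\phi)$; by the independence established above and the crosstalk CDF of Lemma~\ref{le:CT cha} (its second characteristic, stated in \eqref{eq:19}), this conditional probability is $F_{s_{\rm e;b}}\big((z^{\alpha}+C_2(\phi))/C_1(\phi)\big)$. Integrating against the radial density $2z/(d_{\max}^2-d_{\min}^2)$ yields the bracketed integral, completing the reduction.

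I expect the principal obstacle to be conceptual rather than algebraic: correctly reading ``uniformly distributed in $\mathcal{R}_{\rm sus}$'' as uniform by area --- so the radial coordinate is \emph{not} uniform but weighted by $2z/(d_{\max}^2-d_{\min}^2)$ --- and recognizing that the independence of $\theta_{\rm e}$ and $d_{\rm e}$ holds only because the boundaries are constant. For $\theta_{\rm e}$-dependent ${\rm D}_{\min},{\rm D}_{\max}$ the radial limits would couple to the angle and no product form would survive, which is precisely why \eqref{eq:44} is asserted only under Definition~1. Once the area-uniform decomposition and the subsumption of the $C_3(\phi)$ constraint are in hand, the remaining rearrangement and the substitution of $C_1(\phi),C_2(\phi)$ in terms of $\tilde{P}_{\rm b}$ and $\tilde{P}_{\rm jam}$ are routine.
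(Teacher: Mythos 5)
Your proof is correct and lands on exactly the same computation as the paper's, differing only in its entry point. The paper starts from $\mathcal{P}_{\rm out}=1-\Pr\{R^{\rm s}_{\rm b}\ge R_{\rm th}\}$, writes this as the CDF of ${\rm SINR}_{{\rm e},\max}^{\rm uj}$ evaluated at $\frac{1+{\rm SINR}_{\rm b}^{\rm uj}}{2^{R_{\rm th}}}-1$, factors it over the $L$ independent Eves, and inverts \eqref{eq:30} to turn $F_{{\rm SINR}_{\rm e}^{\rm uj}}(x)$ into $\Pr\bigl\{s_{\rm e;b}(\theta_{\rm e})\le \frac{d_{\rm e}^{\alpha}+\tilde P_{\rm jam}}{\tilde P_{\rm b}N_t/x+\tilde P_{\rm jam}}\bigr\}$, which it then integrates against $f_{d_{\rm e}}(z)=\frac{2z}{d_{\max}^2-d_{\min}^2}$. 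You instead route through Proposition~\ref{prop:sor} and the area-ratio formula \eqref{eq:area Pout}--\eqref{eq:general Pout}: the membership condition $d_{\rm e}<\bar d_{\rm e}^{\rm uj}$ rearranges to $s_{\rm e;b}(\theta_{\rm e})>(z^{\alpha}+C_2)/C_1$, and substituting $C_1,C_2$ reproduces the same threshold. The two thresholds coincide identically, and both proofs rest on the same two facts: that area-uniformity over a constant-boundary annular sector factorizes into an independent uniform angle and a radius with density $2z/(d_{\max}^2-d_{\min}^2)$, and that the conditional probability over $\theta_{\rm e}$ is given by the crosstalk CDF of Lemma~\ref{le:CT cha}. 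Your route has the small merit of making explicit why the side constraint $s_{\rm e;b}>C_3(\phi)$ in \eqref{eq:38} drops out (it is subsumed by the distance condition for $z>0$), a point the paper's SINR-CDF route never has to confront; your remark on why constant boundaries are essential for the product form is likewise correct and is the real content of the hypothesis in the proposition.
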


\begin{proof}
For the ease of analytical description, herein we utilize the CDF of the normalized crosstalk in \eqref{eq:19}.
First, rewrite \eqref{eq:SOP} as ${\mathcal P}_{\rm out} = 1 - \Pr\{R^{\rm s}_{\rm b} \geq R_{\rm th}\}$
and recall \eqref{eq:sec rate}, we have
  \begin{equation}\label{eq:46.2}
    \Pr\{R^{\rm s}_{\rm b} \geq R_{\rm th}\}
    = F_{{\rm SINR}_{{\rm e},{\max}}^{\rm uj}}\left(\frac{1 + {\rm SINR}_{\rm b}^{\rm uj}}{2^{R_{\rm th}}} - 1\right)
  \end{equation}
where $F_{{\rm SINR}_{{\rm e},{\max}}^{\rm uj}}(\cdot)$ is the CDF of ${\rm SINR}_{{\rm e},{\max}}^{\rm uj}$, which is given by $F_{{\rm SINR}_{{\rm e},{\max}}^{\rm uj}}(x) = \left(F_{{\rm SINR}_{{\rm e}}^{\rm uj}}(x)\right)^L$ since all Eves are independently distributed. Using \eqref{eq:30}, we have
\begin{equation}\label{eq:27.1}
  F_{{\rm SINR}_{{\rm e}}^{\rm uj}}(x)
  = \Pr\left\{s_{{\rm e};{\rm b}}(\theta_{\rm e}) \leq \frac{{d^{\alpha}_{{\rm e}}} + \tilde{P}_{\rm jam}}{\frac{\tilde{P}_{\rm b} N_t}{x} + \tilde{P}_{\rm jam}}\right\}
\end{equation}
where both $\theta_{\rm e}$ and $d_{{\rm e}}$ are random. Since $\theta_{{\rm e}}$ and $d_{{\rm e}}$ are independent, \eqref{eq:27.1} can be presented as
\begin{equation}
  F_{{\rm SINR}_{{\rm e}}}(x)
  = \int_{d_{\min}}^{d_{\max}} F_{s_{{\rm e};{\rm b}}}\left(\frac{z^{\alpha} + \tilde{P}_{\rm jam}}{\frac{\tilde{P}_{\rm b} N_t}{x} + \tilde{P}_{\rm jam}}\right) f_{d_{\rm e}}(z) {\rm d}z.
\end{equation}
where $f_{d_{\rm e}}(z) \triangleq \frac{2z}{d_{\max}^2 - d_{\min}^2}$ is the PDF of $d_{{\rm e}}$, corresponding to the uniform distribution between two boundaries defined by $\mathcal{D}_{\rm e} = [d_{\rm min}, d_{\rm max}]$. Then,
$\mathcal{P}_{\rm out}$ is directly obtained as \eqref{eq:44}.
\end{proof}

Practically, \eqref{eq:44} can be used for jamming power allocation. As stated in Remark 1, Alice can arbitrarily adjust the value of the constant boundaries in the design, based on the information about $\mathcal{R}_{\rm sus}$ that she has. Particularly, if Alice knows nothing about $\mathcal{R}_{\rm sus}$ (i.e., she assumes $\mathcal{A}_{\rm e} = [0, 2\pi]$ and $\mathcal{D}_{\rm e} = [0, r_{\rm max}]$), minimizing $\mathcal{P}_{\rm out}$ becomes equivalent to minimizing ${\rm Area}\left(\mathcal{R}_{\rm SOR}(\phi)\right)$.

\subsection{Jamming-beneficial Range}
Based on Proposition \ref{prop:SOP}, we find a jamming-beneficial range defined in $d_{\max}$ (i.e., the larger constant distance boundary of $\mathcal{R}_{\rm sus}$) as follows.
\begin{prop}\label{prop:benefit range}
A constraint on $d_{\max}$ that makes the uniform jamming beneficial in reducing the SOP is given by
  \begin{equation}\label{eq:27.3}
    d_{\max} < \left(s_{\rm e;b}^{\rm max} \frac{\tilde{P}_{\rm tot} N_t 2^{R_{\rm th}}}{{1 + \tilde{P}_{\rm tot} d^{-\alpha}_{\rm b} N_t} - 2^{R_{\rm th}}}\right)^{\frac{1}{\alpha}}
  \end{equation}
where $s_{\rm e;b}^{\rm max}$ is the largest feasible crosstalk value defined in Lemma \ref{le:CT cha}.
\end{prop}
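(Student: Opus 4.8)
The plan is to read ``beneficial'' as the statement that a marginal amount of jamming strictly lowers the SOP, i.e. $\left.\frac{\partial \mathcal{P}_{\rm out}}{\partial \phi}\right|_{\phi=0^+}<0$, and to work throughout in the large-antenna regime already adopted in Lemmas~\ref{le:CT cha}--\ref{le:SINR uj}. Starting from Proposition~\ref{prop:SOP}, I would write $\mathcal{P}_{\rm out}=1-I(\phi)^{L}$ with
\begin{equation}
I(\phi)=\int_{d_{\min}}^{d_{\max}} F_{s_{{\rm e};{\rm b}}}\!\left(g(z,\phi)\right)f_{d_{\rm e}}(z)\,{\rm d}z,
\end{equation}
where $g(z,\phi)\triangleq\frac{z^{\alpha}+\tilde{P}_{\rm jam}}{A(\phi)+\tilde{P}_{\rm jam}}$, $A(\phi)\triangleq\frac{\tilde{P}_{\rm b}N_t 2^{R_{\rm th}}}{1+\tilde{P}_{\rm b}d_{\rm b}^{-\alpha}N_t-2^{R_{\rm th}}}$, $\tilde{P}_{\rm b}=(1-\phi)\tilde{P}_{\rm tot}$ and $\tilde{P}_{\rm jam}=\phi\tilde{P}_{\rm tot}$, and $f_{d_{\rm e}}(z)=\frac{2z}{d_{\max}^2-d_{\min}^2}\geq 0$. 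Since $I\in[0,1]$ and $I(0)>0$, the sign of $\partial_\phi\mathcal{P}_{\rm out}$ is opposite to that of $I'(\phi)$, so it suffices to prove $I'(0)>0$.

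Next I would differentiate under the integral sign to get $I'(\phi)=\int_{d_{\min}}^{d_{\max}} F_{s_{{\rm e};{\rm b}}}'(g)\,\partial_\phi g\,f_{d_{\rm e}}(z)\,{\rm d}z$, where $F_{s_{{\rm e};{\rm b}}}'\geq 0$ is the density of the normalized crosstalk. Because $F_{s_{{\rm e};{\rm b}}}'$ and $f_{d_{\rm e}}$ are nonnegative, it is enough to show $\partial_\phi g>0$ for every $z\in[d_{\min},d_{\max}]$ (strictness of $I'(0)$ then follows since $g(z,0)=z^{\alpha}/A(0)$ lies in the interior $(0,s_{\rm e;b}^{\rm max})$ of the crosstalk support, where $F_{s_{{\rm e};{\rm b}}}'>0$ on a positive-measure set). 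Applying the quotient rule at $\phi=0$, where $\tilde{P}_{\rm jam}=0$, gives
\begin{equation}
\left.\partial_\phi g\right|_{0}=\frac{\tilde{P}_{\rm tot}\bigl(A(0)-z^{\alpha}\bigr)-A'(0)\,z^{\alpha}}{A(0)^{2}},
\end{equation}
so the whole question collapses to the sign of this numerator.

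The decisive step is to control $A'(0)$. In the large-$N_t$ regime the threshold term simplifies to $A(\phi)\doteq 2^{R_{\rm th}}d_{\rm b}^{\alpha}$, which is independent of $\phi$ to leading order, so $A'(0)$ is of lower order and is dropped, leaving $\left.\partial_\phi g\right|_{0}\doteq \tilde{P}_{\rm tot}\bigl(A(0)-z^{\alpha}\bigr)/A(0)^{2}$; hence $\partial_\phi g>0$ precisely when $z^{\alpha}<A(0)=\frac{\tilde{P}_{\rm tot}N_t 2^{R_{\rm th}}}{1+\tilde{P}_{\rm tot}d_{\rm b}^{-\alpha}N_t-2^{R_{\rm th}}}$. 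I would then invoke the hypothesis~\eqref{eq:27.3}, namely $d_{\max}^{\alpha}<s_{\rm e;b}^{\rm max}A(0)$, together with part~3 of Lemma~\ref{le:CT cha} and $K_{\rm e;b}=\frac{K_{\rm e}}{1+K_{\rm e}}\cdot\frac{K_{\rm b}}{1+K_{\rm b}}<1$, which give $s_{\rm e;b}^{\rm max}\leq K_{\rm e;b}<1$. Consequently $z^{\alpha}\leq d_{\max}^{\alpha}<s_{\rm e;b}^{\rm max}A(0)<A(0)$ for all $z$ in the range, so $\partial_\phi g>0$ throughout, $I'(0)>0$, and the SOP strictly decreases. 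Geometrically this is the statement that $d_{\max}<\bar{d}_{{\rm e},0}^{\rm nj}=(s_{\rm e;b}^{\rm max}A(0))^{1/\alpha}$ confines jamming's harmful radial widening of the main lobe to distances beyond $\mathcal{R}_{\rm sus}$, leaving only the beneficial angular squeezing active inside the suspicious area.

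I expect the main obstacle to be the justification, in the third paragraph, that $A'(0)$ may be neglected, i.e. that the loss of Bob's signal power caused by diverting power to jamming is asymptotically dominated by the jamming gain. Keeping $A'(0)>0$ exactly, the threshold for $\partial_\phi g>0$ is $z^{\alpha}<\tilde{P}_{\rm tot}A(0)/(\tilde{P}_{\rm tot}+A'(0))$, which is strictly smaller than $A(0)$; the argument therefore relies on the scaling $A'(0)/\tilde{P}_{\rm tot}\to 0$ as $N_t\to\infty$ to push this exact threshold above $s_{\rm e;b}^{\rm max}A(0)$, and turning that asymptotic estimate into a clean sufficient condition (rather than a heuristic) is the delicate part of the proof.
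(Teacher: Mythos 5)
Your argument is correct in substance but takes a genuinely different route from the paper's. You localize at $\phi=0^{+}$ and show that the one-sided derivative of the SOP is negative, whereas the paper's proof (Appendix B) makes a global, non-infinitesimal comparison: writing $a_1=h(\tilde P_{\rm b})$ and $a_2=h(\tilde P_{\rm tot})$ with $h$ equal to your $A$ composed with the power split, it shows that under $d_{\max}^{\alpha}<a_2$ the pointwise inequality $t=\frac{z^{\alpha}+\tilde P_{\rm jam}}{a_1+\tilde P_{\rm jam}}>\frac{z^{\alpha}}{a_2}=t_0$ admits a positive solution $\tilde P_{\rm jam}>\frac{(a_1-a_2)z^{\alpha}}{a_2-z^{\alpha}}$ for all $z\in\mathcal D_{\rm e}$ (using $h'<0$, hence $a_1>a_2$), and then concludes by monotonicity of $F_{s_{\rm e;b}}$ in \eqref{eq:44}; the second condition $d_{\max}^{\alpha}<s_{\rm e;b}^{\max}a_2$ plays the same role as your positive-density remark (it rules out the case where the no-jamming SOP is already zero), and since $s_{\rm e;b}^{\max}\le 1$ it subsumes the first condition, yielding \eqref{eq:27.3}.

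The practical difference is precisely the obstacle you flag yourself. By linearizing, you must control $A'(0)>0$ (the loss in Bob's SINR from diverting power to jamming), and under the stated hypothesis your exact threshold $\tilde P_{\rm tot}A(0)/(\tilde P_{\rm tot}+A'(0))$ exceeds $s_{\rm e;b}^{\max}A(0)$ only when $A'(0)/\tilde P_{\rm tot}\le (1-s_{\rm e;b}^{\max})/s_{\rm e;b}^{\max}$, i.e.\ for $N_t$ large relative to $1/(1-s_{\rm e;b}^{\max})$; this margin degenerates as $K_{\rm e;b}\to 1$ (the strong-LOS setting used in the simulations), so your proof needs an extra asymptotic step beyond the standing large-$N_t$ SINR approximations. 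The paper's comparison never linearizes: the gap $a_1-a_2$ is absorbed into a finite, $z$-uniform lower bound on the required jamming power, so no such additional limit is needed. In exchange, when your argument applies it delivers the slightly stronger conclusion that arbitrarily small jamming already reduces the SOP, whereas the paper only exhibits the existence of some beneficial $\phi>0$.
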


\begin{proof}
See Appendix \ref{proof:benificial}.
\end{proof}

\begin{remark}
Proposition \ref{prop:benefit range} shows that when Eves are located close enough to Alice, uniform jamming is always beneficial in reducing the SOP. Clearly, this range expands with larger $d_{\rm b}$, as well as larger $s_{\rm e;b}^{\rm max}$ or larger $R_{\rm th}$. On the other hand, the range shrinks with larger $N_t$ or $\tilde{P}_{\rm tot}$.

Moreover, we note that \eqref{eq:27.3} has a similar form of that described for the SOR without jamming, i.e., $\mathcal{R}_{\rm SOR}^{\rm nj}$ in \eqref{eq:sor nj}. Recalling the definitions of the largest distance of SOR in \eqref{eq:43.2} and \eqref{eq:radius side lobe}, the physical insight of Proposition \ref{prop:benefit range} can be explained as follows: as long as $\mathcal{R}_{\rm sus} \cap \mathcal{R}_{\rm SOR}^{\rm nj} \neq \emptyset$, there always exists an optimal $\phi$, with which the SOP can be reduced by uniform jamming, compared with the SOP without jamming. The optimization of $\phi$ is discussed in the next section.
\end{remark}

\section{Jamming Power Allocation}

In this section, considering uniform jamming, we investigate the optimal jamming power allocation that minimizes the SOP. The problem can be simply described as
\begin{equation}\label{eq:47}
\mathop {\min }\limits_{\phi} \mathcal{P}_{\rm out}, ~~
{\rm{s}}{\rm{.t}}{\rm{.   }}~~0 \leq \phi \leq 1.
\end{equation}
In practice, Alice may have different accuracy levels of information about $\mathcal{R}_{\rm sus}$ as follows:
\begin{enumerate}
  \item Alice knows nothing about the suspicious area, or only partial information about the suspicious area such as $\mathcal{A}_{\rm e}$ only (or $\mathcal{D}_{\rm e}$ only); and
  \item Alice knows exact information about the suspicious area, i.e., both $\mathcal{A}_{\rm e}$ and $\mathcal{D}_{\rm e}$.
\end{enumerate}
For these two cases, we respectively investigate the jamming power allocation in the following.

\subsection{Jamming with None/Partial Information about $\mathcal{R}_{\rm sus}$}
When Alice knows nothing about $\mathcal{R}_{\rm sus}$, minimizing the SOP becomes equivalent to minimizing the area of SOR, which can be calculated as
\begin{equation}\label{eq:area sor 1}
  {\rm Area}\left(\mathcal{R}_{\rm SOR}^{\rm uj}(\phi)\right) = \int_{0}^{2\pi}\frac{1}{2}\left(\bar{d}_{\rm e}^{\rm uj}(\phi, \theta_{\rm e})\right)^2 {\rm d}\theta_{\rm e}
\end{equation}
where $\bar{d}_{\rm e}^{\rm uj}(\phi, \theta_{\rm e})$ is defined in \eqref{eq:39.1}.
Note that $\mathcal{R}_{\rm SOR}^{\rm uj}(\phi)$ is composed of many side lobes. We use $\mathcal{R}_{{\rm SOR},m}^{\rm uj}(\phi)$ to denote the $m$-th side lobe, and $\mathcal{R}_{{\rm SOR},\mathcal{I}}^{\rm uj}(\phi)$ to denote a group of side lobes with indices described by the set $\mathcal{I}$.
For the case that Alice knows $\mathcal{A}_{\rm e}$ or $\mathcal{D}_{\rm e}$, we can simplify the problem by minimizing partial, other than the entire area of $\mathcal{R}_{\rm SOR}^{\rm uj}(\phi)$ such as
\begin{equation}\label{eq:partial area}
  {\rm Area}\left(\mathcal{R}_{{\rm SOR},\mathcal{I}'}^{\rm uj}(\phi)\right) = \sum\limits_{m \in \mathcal{I}'} {\rm Area}\left(\mathcal{R}_{{\rm SOR},m}^{\rm uj}(\phi)\right)
\end{equation}
where $\mathcal{I}'$ is the set of the concerned side lobe indices, determined by either $\mathcal{A}_{\rm e}$ or $\mathcal{D}_{\rm e}$. Using \eqref{eq:area sor 1} (or \eqref{eq:partial area}) along with \eqref{eq:39.1}, the areas can be numerically calculated and the optimal $\phi$ can be easily founded via one dimensional linear search.
Since it is difficult to derive closed-form expression for ${\rm Area}\left(\mathcal{R}_{{\rm SOR},\mathcal{I}}^{\rm uj}(\phi)\right)$, we evaluate the area of $\mathcal{R}_{{\rm SOR},m}^{\rm uj}(\phi)$ in the following corollary for a special case to further provide some discussions.

\begin{corollary}[Area of $\mathcal{R}_{{\rm SOR},m}^{\rm uj}(\phi)$]\label{coro:area side lobe}
With $\theta_{\rm b} = 0$ and the path loss coefficient being $\alpha = 2$ (which corresponds the free space propagation \cite{3GPP}), ${\rm Area}\left(\mathcal{R}^{\rm uj}_{{\rm SOR},m}\right)$ can be upper bounded as
\begin{equation}\label{eq:area side lobe}
  {\rm Area}\left(\mathcal{R}^{\rm uj}_{{\rm SOR},m}(\phi)\right) \leq \frac{1}{4N_t d}\left(\frac{K_{\rm e;b}}{\pi^2 m^2}C_1(\phi) - 2C_2(\phi)\right)
\end{equation}
where $C_1(\phi)$ and $C_2(\phi)$ are defined in \eqref{eq:40.1} and \eqref{eq:40.2}.
\end{corollary}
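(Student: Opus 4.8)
The plan is to start from the single-lobe version of \eqref{eq:area sor 1}. Because $\alpha=2$ makes $\bigl(\bar{d}_{\rm e}^{\rm uj}\bigr)^2 = C_1(\phi)s_{\rm e;b}(\theta_{\rm e}) - C_2(\phi)$ affine in the crosstalk, the lobe area splits cleanly:
\begin{equation}
{\rm Area}\bigl(\mathcal{R}^{\rm uj}_{{\rm SOR},m}(\phi)\bigr) = \frac{C_1(\phi)}{2}\int_{\Theta_m}s_{\rm e;b}(\theta_{\rm e})\,{\rm d}\theta_{\rm e} - \frac{C_2(\phi)}{2}\,|\Theta_m|,
\end{equation}
where $\Theta_m$ is the angular support of the $m$-th side lobe and $|\Theta_m|$ its width. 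The two terms will be estimated separately, and it is precisely this separation that produces the asymmetric coefficients of $C_1$ and $2C_2$ in the claimed bound.

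I would then make $s_{\rm e;b}$ explicit. Setting $\theta_{\rm b}=0$, the geometric sum $t_{\rm e;b}$ in \eqref{eq:23.2} collapses to a Fej\'er-type kernel, so that $s_{\rm e;b}(\theta_{\rm e}) = \frac{K_{\rm e;b}}{N_t^2}\,\frac{\sin^2(\pi d_0 N_t \sin\theta_{\rm e})}{\sin^2(\pi d_0 \sin\theta_{\rm e})}$. The zeros of the numerator, at $\sin\theta_{\rm e}=k/(N_t d_0)$, are exactly the lobe boundaries, so the $m$-th side lobe is the band $\sin\theta_{\rm e}\in[m/(N_t d_0),(m+1)/(N_t d_0)]$. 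Changing variables to $u=\sin\theta_{\rm e}$ and using that, for fixed $m$ and large $N_t$, the lobe sits at a vanishing angle (hence $\cos\theta_{\rm e}\to1$, ${\rm d}\theta_{\rm e}\approx{\rm d}u$), I get $|\Theta_m|\approx 1/(N_t d_0)$, which already supplies the $1/(N_t d)$ prefactor with $d=d_0$.

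For the crosstalk integral I would freeze the slowly varying denominator at its lobe value $\sin^2(\pi d_0 u)\approx(\pi m/N_t)^2$ --- equivalently, replace $s_{\rm e;b}$ by its peak value $K_{\rm e;b}{\rm PV}_m = K_{\rm e;b}/(\pi^2 m^2)$ from \eqref{eq:peak value} --- pull it out, and integrate the remaining $\sin^2(\pi d_0 N_t u)$ across one hump. The substitution $v=\pi d_0 N_t u$ turns this into $\int_{\pi m}^{\pi(m+1)}\sin^2 v\,{\rm d}v = \pi/2$, which is the source of the extra factor $\tfrac{1}{2}$ that the $C_1$ term carries but the $C_2$ term does not; this gives $\int_{\Theta_m}s_{\rm e;b}\,{\rm d}\theta_{\rm e}\lesssim \frac{K_{\rm e;b}}{\pi^2 m^2}\,\frac{1}{2N_t d_0}$. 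Substituting this and $|\Theta_m|=1/(N_t d_0)$ back into the split yields $\frac{1}{4N_t d_0}\bigl(\frac{K_{\rm e;b}}{\pi^2 m^2}C_1(\phi)-2C_2(\phi)\bigr)$, i.e.\ the claim.

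The delicate point is upgrading these leading-order estimates to a genuine one-sided inequality. The small-angle step ${\rm d}\theta_{\rm e}\approx{\rm d}u$ actually under-estimates $|\Theta_m|$ (since $1/\sqrt{1-u^2}\ge1$), which is safe for the subtracted $C_2$ term but works against the $C_1$ term, and freezing the denominator must be verified to over-estimate the fraction. I expect both to be absorbed in the large-$N_t$ regime the paper adopts (cf.\ the footnote on treating asymptotic approximations as equalities), where every discarded factor is $1+o(1)$; extracting a clean finite-$N_t$ inequality, if one insisted, would instead call for the two-sided bound $\tfrac{2x}{\pi}\le\sin x\le x$ on both kernels, and this is the only genuinely technical step.
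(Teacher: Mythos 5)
Your proposal follows essentially the same route as the paper's Appendix C: with $\alpha=2$ the squared radius is affine in $s_{\rm e;b}$, the Fej\'er-kernel denominator is frozen at the lobe edge to give the peak value $K_{\rm e;b}/(\pi^2 m^2)$, the small-angle identification of $\sin\theta_{\rm e}$ with $\theta_{\rm e}$ fixes the lobe width at $1/(N_t d)$, and the substitution $\eta = N_t\pi d\,\theta_{\rm e}$ with $\int_{m\pi}^{(m+1)\pi}\sin^2\eta\,{\rm d}\eta = \pi/2$ produces the asymmetric $C_1$ versus $2C_2$ coefficients. Your closing caveat about the one-sided nature of the small-angle and frozen-denominator steps applies equally to the paper's own argument, which likewise treats these as asymptotically tight in the large-$N_t$ regime.
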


\begin{proof}
  See Appendix \ref{proof:coro2}.
\end{proof}

\begin{remark}\label{rk:4}
  In \eqref{eq:area side lobe}, it is shown that the area of every side lobe is inversely proportional to $N_t$, indicating that the SOR side lobes can asymptotically vanish with ultimately large $N_t$. Moreover, it is inversely proportional to $m^2$, which means that the area of the SOR will rapidly decrease for the side lobes with large indices, i.e., with large angle difference to $\theta_{\rm b}$. This result indicates that Eves from different directions (i.e., within different side lobes) have different significance in causing secrecy outage, hence should be treated differently in the jamming design.

\end{remark}

\subsection{Jamming with Exact Information of $\mathcal{R}_{\rm sus}$}
With the information of $\mathcal{R}_{\rm sus}$, Alice can calculate and apply the value of $\mathcal{P}_{\rm out}$ in the design (at least numerically),\footnote{The calculation requires the knowledge of $\mathcal{D}_e$ and $\mathcal{A}_e$. Clearly, uniform jamming is not optimal in this condition. However, for the ease of analysis, we first devise the optimal power allocation for uniform jamming; then, the resulted jamming power can be allocated directionally to further improve efficiency.} using either \eqref{eq:general Pout} or \eqref{eq:44}.
Although in practice, \eqref{eq:47} can be readily solved by one dimensional linear search, it fails to provide the optimal $\phi$ in closed form.
In the following corollary, we provide closed-form solutions and discussions for a special case.

\begin{corollary}[Jamming power allocation for given $\theta_{\rm e}$]\label{coro:equal theta}
  For constant boundaries of $\mathcal{R}_{\rm sus}$ and given $\theta_{\rm e}$, which is equal for all Eves, the optimal $\phi$ can be determined as
  \begin{equation}\label{eq:phi opt}
    \phi^{\tt opt} = \left\{
    {\begin{array}{*{20}{c}}
{\phi_g^{\tt opt},}&{ \phi_0 \notin [0,1]}\\
{\min\{\phi_g^{\tt opt}, \phi_0\}}&{\phi_0 \in [0,1]}
\end{array}}
\right.
  \end{equation}
  where
  \begin{align}\label{eq:35}
  \phi^{\tt opt}_g &= 1 - \frac{(2^{R_{\rm th}} - 1)+ \sqrt{\frac{(2^{R_{\rm th}} - 1)2^{R_{\rm th}}s_{\rm e;b}(\theta_{\rm e})}{1-s_{\rm e;b}(\theta_{\rm e})}N_t}}{ d^{-\alpha}_{\rm b} N_t \tilde{P}_{\rm tot}},\\
  \label{eq:35.1}
  \phi_0 &= \frac{ \frac{{{{\tilde P}_{{\rm{tot}}}}{N_t}{2^{{R_{{\rm{th}}}}}}}s_{\rm e;b}(\theta_{\rm e})}{{{\rm{1 + }}{{\tilde P}_{{\rm{tot}}}}d_{\rm{b}}^{ - \alpha }{N_t} - {2^{{R_{{\rm{th}}}}}}}} - d_{\min}^{\alpha}}{(1 - s_{\rm e;b}(\theta_{\rm e})){\tilde P}_{\rm tot}}.
\end{align}
\end{corollary}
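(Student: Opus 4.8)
The plan is to reduce the bivariate SOP minimization to a one-dimensional analysis of the SOR boundary distance and then exploit convexity. Since $\theta_{\rm e}$ is fixed and common to all Eves, the crosstalk $s_{\rm e;b}(\theta_{\rm e})$ is a constant, and by Proposition~\ref{prop:sor} an Eve causes outage exactly when its distance is below the main-lobe radius $\bar d_{\rm e}^{\rm uj}(\phi,\theta_{\rm e})$ of \eqref{eq:39.1}. With the Eves uniformly distributed in $[d_{\min},d_{\max}]$, the single-Eve SOP equals $\Pr\{d_{\rm e}<\bar d_{\rm e}^{\rm uj}(\phi,\theta_{\rm e})\}$, which together with \eqref{eq:general Pout} is a nondecreasing function of $\bar d_{\rm e}^{\rm uj}$ that saturates at $0$ once $\bar d_{\rm e}^{\rm uj}\le d_{\min}$. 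Hence minimizing $\mathcal P_{\rm out}$ over $\phi$ is equivalent to minimizing $g(\phi)\triangleq(\bar d_{\rm e}^{\rm uj})^{\alpha}=C_1(\phi)s_{\rm e;b}(\theta_{\rm e})-C_2(\phi)$, subject to the feasibility ceiling $\phi\le\phi^{\max}$ from \eqref{eq:22.1}, with the proviso that any $\phi$ driving $g$ below $d_{\min}^{\alpha}$ already delivers $\mathcal P_{\rm out}=0$.

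The first main step is to show $g$ is strictly convex on $[0,\phi^{\max})$ and blows up as $\phi\to\phi^{\max}$, so that it has a unique interior minimizer. Substituting $C_1,C_2$ from \eqref{eq:40.1} and \eqref{eq:40.2}, I would write $g(\phi)=T(\phi)\,s_{\rm e;b}(\theta_{\rm e})-\bigl(1-s_{\rm e;b}(\theta_{\rm e})\bigr)\tilde P_{\rm tot}\phi$, where $T(\phi)$ is the fractional term of $C_1(\phi)$; its denominator $1+(1-\phi)\tilde P_{\rm tot}d_{\rm b}^{-\alpha}N_t-2^{R_{\rm th}}$ is positive and affine in $\phi$ by \eqref{eq:22.1}, and a direct computation of $T''$ shows $T$ is convex, so $g$ (a positive multiple of $T$ plus an affine term) is convex. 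Setting $g'(\phi)=0$ and clearing the squared denominator reduces to a single square-root equation whose admissible positive root yields $\phi_g^{\tt opt}$ in \eqref{eq:35}; this is the unconstrained minimizer of the SOP whenever the floor at $d_{\min}$ is never reached.

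The second main step is to locate the value $\phi_0$ at which the main-lobe boundary first touches $d_{\min}$, i.e.\ $g(\phi)=d_{\min}^{\alpha}$, since for $\phi$ past this crossing the SOP is already zero and only the least such $\phi$ is worth retaining. Solving $g(\phi)=d_{\min}^{\alpha}$ exactly gives a quadratic in $\phi$; invoking the large-$N_t$ regime adopted throughout the paper, the $\phi$-dependence of $T(\phi)$ cancels to leading order (the factor $(1-\phi)$ appears in both the numerator and the dominant $\tilde P_{\rm tot}d_{\rm b}^{-\alpha}N_t$ term of the denominator), so $T(\phi)\approx T(0)$ and $g$ is effectively affine in $\phi$. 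The crossing condition then linearizes to $T(0)\,s_{\rm e;b}(\theta_{\rm e})-\bigl(1-s_{\rm e;b}(\theta_{\rm e})\bigr)\tilde P_{\rm tot}\phi=d_{\min}^{\alpha}$, which solves to the closed form $\phi_0$ in \eqref{eq:35.1}.

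Finally, I would assemble the cases using the resulting quasi-convexity of $\mathcal P_{\rm out}(\phi)$. If $\phi_0\notin[0,1]$, the SOP cannot be zeroed by a feasible allocation (or is already zero at $\phi=0$), so the minimizer is the interior point $\phi_g^{\tt opt}$. If $\phi_0\in[0,1]$, then on the decreasing branch of the convex $g$ the first allocation achieving $\mathcal P_{\rm out}=0$ is $\phi_0$, while the SOP is otherwise minimized at $\phi_g^{\tt opt}$; preferring the smallest power that attains the minimum SOP selects $\min\{\phi_g^{\tt opt},\phi_0\}$, giving \eqref{eq:phi opt}. I expect the hardest part to be this second step together with the case analysis: justifying the large-$N_t$ linearization that collapses the exact quadratic for $g(\phi)=d_{\min}^{\alpha}$ to the affine form producing $\phi_0$, and then verifying that $\min\{\phi_g^{\tt opt},\phi_0\}$ picks the correct branch in every parameter regime (in particular that $\phi_0<\phi_g^{\tt opt}$ holds precisely when the floor is reached before the interior optimum), all while respecting the feasibility ceiling $\phi^{\max}$ of \eqref{eq:22.1}.
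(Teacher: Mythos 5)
Your proposal matches the paper's proof essentially step for step: reduce the problem to minimizing $g(\phi)=C_1(\phi)\,s_{\rm e;b}(\theta_{\rm e})-C_2(\phi)$ (the $\alpha$-th power of the main-lobe radius), obtain $\phi_g^{\tt opt}$ from the stationary-point equation, obtain $\phi_0$ by freezing the fractional term at its $\phi=0$ value so that $g(\phi)=d_{\min}^{\alpha}$ becomes affine, and select $\min\{\phi_g^{\tt opt},\phi_0\}$ whenever $\phi_0\in[0,1]$. One remark: your claim that $g$ is \emph{convex} is the mathematically correct one (so the stationary point is indeed the minimizer), whereas the paper's appendix asserts $g_1$ and hence $g$ are concave --- evidently a slip --- and your large-$N_t$ cancellation argument for the linearization is an acceptable substitute for the paper's ``$g_1$ much smaller than $g_2$'' heuristic.
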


\begin{proof}
  See Appendix \ref{proof:coroET}.
\end{proof}

Note when $\phi^{\tt opt} = \phi^{\tt opt}_g$, the optimal jamming power decreases with $d_{\rm b}$ and $s_{\rm e;b}(\theta_{\rm e})$, whereas it will increase when $\phi^{\tt opt} = \phi_0$.
The part that dominates the final result in \eqref{eq:phi opt} depends on the value of $\theta_{\rm e}$.
Detailed discussions will be provided in Section VII along with simulations.

\section{Directional Jamming}

In this section, we propose directional jamming algorithms to allocate jamming power more efficiently than uniform jamming, based on the following facts:
\begin{enumerate}
  \item
  With the information of $\mathcal{R}_{\rm sus}$, Alice can perform jamming only to the suspicious directions instead of the entire null space of $\mathbf{h}_{\rm b}$.
  \item
  Without information of $\mathcal{R}_{\rm sus}$, jamming towards different directions also needs to be treated differently, as stated in Remark \ref{rk:4}.
\end{enumerate}
At a cost of slightly increasing the implementation complexity compared with uniform jamming, directional jamming is able to substantially reduce the SOP. In following subsections, we present power allocation algorithms for directional jamming with and without the information of $\mathcal{R}_{\rm sus}$.

\subsection{Directional Jamming with the Information of $\mathcal{R}_{\rm sus}$}

When jamming is not uniformly performed, from \eqref{eq:7}, the SINR at Eve is represented as
\begin{equation}\label{eq:SINR nj}
  {\rm SINR}_{\rm e}^{\rm dj} = \frac{P_{\rm b}d^{-\alpha}_{\rm e}|\mathbf{h}_{\rm e}^H \mathbf{w}_{\rm b}|^2}{N_0 + d^{-\alpha}_{\rm e} \mathbf{h}_{\rm e}^H \mathbf{V}{\rm diag}\left(\bar{\mathbf{p}}\right)\mathbf{V}^H \mathbf{h}_{\rm e}}
\end{equation}
where $\mathbf{V}\in \mathbb{C}^{N_t \times N}$ is the matrix that spans the jamming space, with the $n$-th column vector being $\mathbf{v}_n$, and $\bar{\mathbf{p}} = \left(\bar{P}_1,...,\bar{P}_N\right)^T$, where $\bar{P}_n$ is the power allocated to the $n$-th jamming direction as defined in \eqref{eq:leg sig mod}. Correspondingly, the SOR now can be described as
\begin{align}\label{eq:SOR dj}
  \mathcal{R}_{\rm SOR}^{\rm dj}(\bar{\mathbf{p}})
  &= \left\{(d_{\rm e}, \theta_{\rm e})\left.\right|d_{\rm e} \leq \bar{d}_{\rm e}^{\rm dj}(\bar{\mathbf{p}},\theta_{\rm e}) \right\}\\
\label{eq:70}
 \bar{d}_{\rm e}^{\rm dj}(\bar{\mathbf{p}},\theta_{\rm e}) &= \Bigg[\bigg(\frac{2^{R_{\rm th}}(1 - \phi)\tilde{P}_{\rm tot}N_ts_{\rm e;b}(\theta_{\rm e})}{{1 + (1 - \phi)\tilde{P}_{\rm tot}d^{-\alpha}_b N_t} - 2^{R_{\rm th}}} \nonumber\\
 &~~~~- \bar{\mathbf{s}}(\theta_{\rm e})^H \mathbf{V}{\rm diag}(\tilde{\mathbf{p}})\mathbf{V}^H \bar{\mathbf{s}}(\theta_{\rm e})\bigg)^{\frac{1}{\alpha}}\Bigg]^+
\end{align}
where $\tilde{\mathbf{p}} \triangleq \frac{\bar{\mathbf{p}}}{N_0}$, $\bar{\mathbf{s}}(\cdot)$
was defined in \eqref{eq:2}.
The superscript $(\cdot)^{\rm dj}$ stands for ``directional jamming''.

Design directional jamming using \eqref{eq:SINR nj} induces high complexity especially when $N$ is large, since changing any element in the $N$-dimensional vector $\bar{\mathbf{p}}$ requires re-calculation of $\mathcal{R}_{\rm SOR}^{\rm dj}(\bar{\mathbf{p}})$.
Hence, we alternatively propose a two-step suboptimal power allocation method for directional jamming in Algorithm \ref{alg:1}, which firstly find the optimal jamming power assuming uniform jamming, then reallocate it directionally based on a criterion of jamming subspace selection.
\begin{algorithm}[tbp]
\caption{Directional jamming with the information of $\mathcal{R}_{\rm sus}$}
\begin{algorithmic}[1]
\State {\bf Initialization:} Update the information of $\theta_{\rm b}$, $d_{\rm b}$ and $\mathcal{R}_{\rm sus}$ at Alice.
\State Assuming null space-based uniform jamming, find
  \begin{equation}\nonumber
    \phi^{\tt opt} = \arg\mathop {\min }\limits_{\phi} \mathcal{P}_{\rm out}
  \end{equation}
  through one dimensional linear search over $\phi \in [0,1)$.
\State Select a $N$-dimensional subspace ($\mathbf{v}_n, n = 1,...,N$ and $N \leq N_t$) from ${\rm null}\{\mathbf{h}_{\rm b}\}$ according to \eqref{eq:43}, where $\theta_{\mathbf{v}_n}$ is defined as \eqref{eq:42}.
\State Equally allocate $P_{\rm jam}^{\tt opt} = \phi^{\tt opt}P_{\rm tot}$ to the selected beams in step $3$ such as $\frac{P_{\rm jam}^{\tt opt}}{N}$.
\end{algorithmic}\label{alg:1}
\end{algorithm}
In Step $2$, $\mathcal{P}_{\rm out}$ can be calculated numerically using \eqref{eq:general Pout} or \eqref{eq:44}, depending on the available information of $\mathcal{R}_{\rm sus}$. Note that $\theta_{\rm b}$, $d_{\rm b}$ and $\mathcal{R}_{\rm sus}$ are long term parameters, thus the updating period of Step $1$ can be much longer than the computation time required by the other steps.

 Since $\mathcal{R}_{\rm sus}$ is defined by physical angles, it is necessary to set up a mapping between the jamming space and physical angle to concentrate the jamming power towards $\mathcal{R}_{\rm sus}$. We propose a heuristic subspace selection method for Step $3$.
First, map $\mathbf{v}_n$ to physical angle as
\begin{equation}\label{eq:42}
  \theta_{\mathbf{v}_n} = \arg  \mathop {\max}\limits_{\theta \in [-\frac{\pi}{2},\frac{\pi}{2}]} \left|\bar{\mathbf{s}}(\theta)^T \mathbf{v}_n\right|^2.
\end{equation}
After that, $P_{\rm jam}^{\tt opt}$ is equally reallocated to the beams whose indices are
\begin{equation}\label{eq:43}
  \mathcal{N} \triangleq \{n \left.\right| \theta_{\mathbf{v}_n} \in \mathcal{A}_{\rm e}\}.
\end{equation}
The power allocated to each beam is now $\frac{P_{\rm jam}^{\tt opt}}{\dim(\mathcal{N})}$.
In practice, $\mathbf{v}_n$ is not necessarily in ${\rm null}(\mathbf{h}_{\rm b})$ and an alternative is to find $\mathbf{v}_n$ as the column vectors of a $N_t$-dimensional DFT matrix for the following reasons; 1) selected columns of the DFT matrix can form a good substitute of ${\rm null}(\mathbf{h}_{\rm b})$, as $N_t \to \infty$ \cite{JSDM}; 2) using pre-defined DFT basis as the jamming space avoids channel inverse calculation, which induces high computation complexity especially when $N_t$ is large \cite{Zhu:Arxiv14}; and
3) most importantly, the structure of DFT matrix provides very sharp beam pattern towards the physical angle $\theta_{\mathbf{v}_n}$ in \eqref{eq:42}, therefore, the beam selection criterion \eqref{eq:43} can be very efficient since with sharper beams, there will be less jamming power leaked outside of $\mathcal{R}_{\rm sus}$.

\subsection{Directional Jamming without Information of $\mathcal{R}_{\rm sus}$}
Without any information of $\mathcal{R}_{\rm sus}$,
the objective of directional jamming power allocation becomes to minimize the area of $\mathcal{R}_{\rm SOR}^{\rm dj}(\bar{\mathbf{p}})$ in \eqref{eq:SOR dj} for $\theta_{\rm e} \in [0,2\pi]$, which is calculated as
\begin{equation}\label{eq:71}
  {\rm Area}\left(\mathcal{R}_{\rm SOR}^{\rm dj}(\bar{\mathbf{p}})\right) = \int_0^{2\pi} \frac{1}{2}\left(\bar{d}_{\rm e}^{\rm dj}(\bar{\mathbf{p}}, \theta_{\rm e})\right)^2 {\rm d}\theta_{\rm e}.
\end{equation}
A general closed-form expression of \eqref{eq:71} is not available, and its convexity is unknown. Hence, numerically minimizing ${\rm Area}\left(\mathcal{R}_{\rm SOR}^{\rm dj}(\bar{\mathbf{p}})\right)$ is NP-hard. To overcome this, we propose Algorithm 2, which iteratively finds the optimal $n$-th element of $\bar{\mathbf{p}}$ while keeping the others fixed.

\begin{algorithm}[tbp]
  \caption{Iterative directional jamming without information of $\mathcal{R}_{\rm sus}$}
  \begin{algorithmic}[1]
    \State {\bf Initialization:} Update the information of $\theta_{\rm b}$ and $d_{\rm b}$ at Alice; set an initial jamming power allocation vector $\bar{\mathbf{p}}_0$, satisfying $||\bar{\mathbf{p}}_0|| \leq P_{\rm tot}\phi^{\rm max}$; set the initial iteration index $j = 1$.
    \For {$n = 1$ to $N$}
    \State Update
      \begin{multline}\nonumber
      \bar{\mathbf{p}}_j(n) = \mathop {\arg \min }\limits_{x\in [0,x_{\rm max} )} {\rm Area}\Bigg(\mathcal{R}_{\rm SOR}^{\rm dj}\bigg([\bar{\mathbf{p}}_j(1),...,\\
      \bar{\mathbf{p}}_j(n-1),x,\bar{\mathbf{p}}_{j-1}(n+1),...,
      \bar{\mathbf{p}}_{j-1}(N)]\bigg)\Bigg)
      \end{multline}
      ~~~~where $x_{\rm max} = P_{\rm tot}\phi^{\rm max}-\sum\limits_{i=1}^{n-1}\bar{p}_j(i) - \sum\limits_{i = n+1}^N \bar{p}_{j - 1}(i)$.
    \EndFor
    \If {$||\bar{\mathbf{p}}_j - \bar{\mathbf{p}}_{j-1}|| \geq \varepsilon$}
      \State $j = j + 1$; {\bf go to} step 2.
    \ElsIf {$||\bar{\mathbf{p}}_j - \bar{\mathbf{p}}_{j-1}|| < \varepsilon$}
      \State \Return
    \EndIf
  \end{algorithmic}\label{alg:2}
\end{algorithm}

Algorithm \ref{alg:2} provides a sub-optimal solution which reduces the complexity by degrading the original problem to one-dimensional linear search. However, for large $N_t$, the complexity is still huge since during each main iteration, $N \sim \mathcal{O}(N_t)$ times of linear searching are required to fully update $\bar{\mathbf{p}}$. Hence, Algorithm \ref{alg:2} is not suitable for some scenarios where $\theta_{\rm b}$ and $d_{\rm b}$ change fast. In this light, we propose a simplified algorithm, Algorithm 3, to further reduce the complexity. In Step 2 of Algorithm 3, $\theta_m$ is the mean angle of the $m$-th side lobe of $\mathcal{R}_{\rm SOR}^{\rm dj}(\bar{\mathbf{p}})$ ($m = 0$ denotes the main lobe). In Step 4, $\bar{\mathbf{p}}_{\rm boundary}$ follows the structure such as
\begin{align}\label{eq:boundary}
  \bar{{\mathbf{p}}}_{\rm boundary} &= \left\{0, ..., \bar{P}_{m'_1}, 0,...,\bar{P}_{m'_2},0,...\right\}, \nonumber\\
  \bar{P}_{m'_1} + \bar{P}_{m'_2} &= \phi{P}_{\rm tot}
\end{align}
where $m'_1$ and $m'_2$ are the indices of the two {\em dominating side lobes}, which are located most closely to the main lobe (from both sides).
 The derivation of Algorithm 3 is described in Appendix \ref{proof:alg3}.

With Algorithm $3$, jamming is performed in only two dominating directions in the neighborhood of $\theta_{\rm b}$, for the reasons that 1) for the region with large angle difference to Bob, allocating much jamming power is inefficient since $\mathcal{R}_{\rm SOR}^{\rm nj}$ in this region is generally very small; 2) for the directions highly in-line with $\theta_{\rm b}$, jamming should be avoided as it will cause severe interference to Bob.
Note that for every realization of $\phi$, only single time of linear search is required in Step $4$. The complexity is irrelative to $N$ (which is large in general), hence can be greatly reduced compared to Algorithm \ref{alg:2}. As a possible extension, more than two dominating directions can be involved in the design while the trade-off between complexity and performance exists.

\begin{algorithm}[tbp]
  \caption{Simplified directional jamming without information of $\mathcal{R}_{\rm sus}$}
  \begin{algorithmic}[1]
    \State {\bf Initialization:} Update the information of $\theta_{\rm b}$ and $d_{\rm b}$ at Alice; Initialize ${\rm Area}_{\min} = {\rm Inf}.$;
    \State Calculate $s_{{\rm e;b}}(\theta_{m}), m = 1,...,M$. Determine $m'_1, m'_2$, satisfying that $s_{{\rm e;b}}(\theta_{m'_1}) \geq s_{{\rm e;b}}(\theta_{m'_2}) \geq s_{{\rm e;b}}(\theta_{m}), \forall m \neq m'_1$ and $m\neq m'_2$;
    \For {$\phi = 0$ to $1$}
    \State
    Find
    \begin{equation}\nonumber
      \bar{\mathbf{p}}_{\rm boundary}^{\tt opt} = {\rm arg}\min {\rm Area}\left(\mathcal{R}_{\rm SOR}^{\rm dj}\left(\bar{\mathbf{p}}_{\rm boundary}\right)\right).
    \end{equation}
    \State
    Calculate ${\rm Area_{\rm SOR}^{\rm dj}\left(\bar{\mathbf{p}}_{\rm boundary}^{\tt opt}\right)}$ according to \eqref{eq:71};
    \If {${\rm Area_{\rm SOR}^{\rm dj}\left(\bar{\mathbf{p}}_{\rm boundary}^{\tt opt}\right)} < {\rm Area}_{\min}$}
      \State ${\rm Area}_{\min} = {\rm Area_{\rm SOR}^{\rm dj}\left(\bar{\mathbf{p}}_{\rm boundary}^{\tt opt}\right)}$;
      \State $\phi^{\tt opt} = \phi$; $\bar{p}^{\tt opt} = \bar{\mathbf{p}}_{\rm boundary}^{\tt opt}$.
    \Else
      \State {\bf continue}
    \EndIf
    \EndFor
  \end{algorithmic}\label{alg:3}
\end{algorithm}

{
\section{Extension to Multiuser and Multi-cell Scenarios}
We focus on the single-cell and single-user scenario in previous sections. In this section, we now show how the SOR can be affected by multiple users and cells. We also provide discussions on the design of secure transmission in these scenarios with future research challenges.

\subsection{Multiuser Transmission}
When multiple legitimate users (i.e., multiple Bobs) are presented in massive MIMO systems for Rician channels,
the multiuser interference between Bobs is trivial as long as their LOS angles have large difference, which can be readily ensured via user scheduling. On the contrary, the multiuser interference to Eves can be seen as equivalent jamming considering that single-user decoder is adopted at Eves, which is likely to happen when Eves are low-cost devices. Hence, when multiuser beamforming is applied for Bobs, the received multiuser interference at Eve becomes equal to the directional jamming, transmitted towards other Bobs' directions. Consequently, the SOR of an objective Bob will be shrunk in the directions of the other Bobs.
Denote the set of all legitimate users as $\mathcal{I}_{\rm r}^{\rm MU} = \{{\rm b}_1, ..., {\rm b}_{U}\}$. Similar to \eqref{eq:SOR dj} and \eqref{eq:70}, which describe the SOR for directional jamming, the SOR of user ${\rm b}_u \in \mathcal{I}_{\rm r}^{\rm MU}$ in the presence of multiple users can now be described as
\begin{align}
  \mathcal{R}_{{\rm SOR}}^{{\rm MU},{\rm b}_u}(\bar{\mathbf{p}}_{{\rm b}_u})
  = \left\{(d_{\rm e}, \theta_{\rm e})\left.\right|d_{\rm e} \leq \bar{d}_{\rm e}^{{\rm MU},{\rm b}_u}(\bar{\mathbf{p}}_{{\rm b}_u},\theta_{\rm e}) \right\}
\end{align}
\begin{multline}
 \bar{d}_{\rm e}^{{\rm MU},{\rm b}_u}(\bar{\mathbf{p}}_{{\rm b}_u},\theta_{\rm e}) = \Bigg[\bigg(\frac{2^{R_{\rm th}}(1 - \phi)\tilde{P}_{\rm tot}N_ts_{{\rm e;}{\rm b}_u}(\theta_{\rm e})}{{1 + (1 - \phi)\tilde{P}_{\rm tot}d^{-\alpha}_{{\rm b}_u} N_t} - 2^{R_{\rm th}}}\\
  - \bar{\mathbf{s}}(\theta_{\rm e})^H \mathbf{V}{\rm diag}(\tilde{\mathbf{p}}_{{\rm b}_u})\mathbf{V}^H \bar{\mathbf{s}}(\theta_{\rm e})\bigg)^{\frac{1}{\alpha}}\Bigg]^+
\end{multline}
where $\mathbf{V} = \left[\mathbf{W}_{\rm b},\mathbf{V}_{\rm j}\right] \in \mathbb{C}^{N_t \times N}$ spans the equivalent jamming space, in which $\mathbf{W}_{\rm b} = \left[\mathbf{w}_{{\rm b}_{u'}}\right],u' = 1,...,U, u'\neq u$ spans the signaling space for the other legitimate users, while $\mathbf{V}_{\rm j} \in {\rm null}\left(\mathbf{W}_{\rm b}\right)$ is the jamming space. Correspondingly, the power allocation vector can be divided into two parts as
  $\tilde{\mathbf{p}}_{{\rm b}_u} = \left[\tilde{\mathbf{p}}_{{\rm b}_u,{\rm sig}}^T, \tilde{\mathbf{p}}_{{\rm b}_u,{\rm jam}}^T\right]^T$
where
$\tilde{\mathbf{p}}_{{\rm b}_u,{\rm sig}} \triangleq \left[\tilde{P}_{{\rm b}_1},...,\tilde{P}_{{\rm b}_{u-1}}, \tilde{P}_{{\rm b}_{u+1}}, ..., \tilde{P}_{{\rm b}_U}\right]^T$
is the signal power allocation vector for all legitimate users except for ${\rm b}_u$. For given fixed $\tilde{\mathbf{p}}_{{\rm b}_u,{\rm sig}}$,
in order to minimize ${\rm Area}\left(\mathcal{R}_{{\rm SOR}}^{{\rm MU},{\rm b}_u}(\bar{\mathbf{p}}_{{\rm b}_u})\right)$ for user ${\rm b}_u$, the directional jamming algorithms, i.e., Algorithm $2$ and $3$, can be directly applied herein.

Considering communication secrecy for the entire multiuser transmission system, the optimization problem can be reasonably re-formulated as a min-max problem such as
\begin{equation}\label{eq:55.1}
  \begin{array}{l}
\mathop {\min }\limits_{{{{\bf{\bar p}}}_{{{\rm{b}}_u}}}} \mathop {\max }\limits_{{{\rm{b}}_u}} {\rm{Area}}\left( {\mathcal{R}_{{\rm{SOR}}}^{{\rm{MU}},{{\rm{b}}_u}}\left( {{{{\bf{\bar p}}}_{{{\rm{b}}_u}}}} \right)} \right)\\
{\rm{s}}{\rm{. t}}{\rm{. }}~~{\left\| {{{{\bf{\bar p}}}_{{{\rm{b}}_u}}}} \right\|_1} < {P_{{\rm{tot}}}},{{\rm{b}}_u} \in \mathcal{I}_{\rm r}^{\rm MU}.
\end{array}
\end{equation}
The main challenge in solving \eqref{eq:55.1} is that allocating power for one user affects the SORs of other users.
Hence, the power needs to be jointly allocated for all users, and the complexity of such joint optimization can be very high. Hence, it is desirable to develop simplified algorithms for the multiuser scenario.

\begin{figure}[tbp]
  \centering
  \psfrag{x}[][][0.7]{Jamming power allocation coefficient $\phi$}
  \psfrag{y}[][][0.7]{Radius of the lobes of $\mathcal{R}_{\rm SOR}^{\rm uj}$ (m)}
  \psfrag{main lobe}[l][l][0.7]{main lobe}
  \psfrag{the m th side lobes m 1 2 3}[l][l][0.7]{the $m$-th side lobes, $m = 1, 2, 3...$}
  \includegraphics[width=1\columnwidth,height = 2.5in]{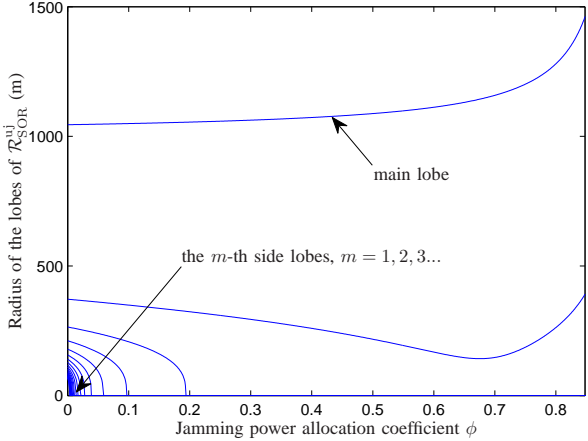}
  \caption{Radius of the main lobe/side lobes of $\mathcal{R}_{\rm SOR}^{\rm uj}$ vs. $\phi$. Parameters setting: $\theta_{\rm b} = 0^\circ$, $d_{\rm b} = 100$m, $N_t = 100$, and $R_{\rm th} = 10$bps/Hz.}
  \label{fig:radius}
\end{figure}

\subsection{Multi-cell Network}
In multi-cell massive MIMO networks, it is commonly assumed that the training pilots are reused among cells. Correspondingly, pilot contamination results in imperfect CSI estimation as well as nonnegligible multi-cell interference from Alices in adjacent cells. Denote $\theta_{i,{\rm b}}^j$ as the angular direction of Bob in Cell $i$ seen from Alice in Cell $j$, and Cell $0$ as the objective cell where Bob $0$ exists. Some major effects of imperfect CSI and multi-cell interference on the SOR of Bob $0$ are described as follows:
\begin{itemize}
  \item Due to pilot contamination from Bobs $i$, $\forall i \neq 0$, the SOR of Bob $0$ will be enlarged in the directions of $\theta_{i,{\rm b}}^0$.
  \item Multi-cell interference to Bob $0$ will isotropically enlarge his SOR. On the other hand, Eves in Cell $0$ are equivalently jammed by the multi-cell interference from Alice $j$ in cell $j$, $\forall j \neq 0$, especially in the directions of $\theta^j_{0,{\rm b}}$ and $\theta^j_{j,{\rm b}}$. Thereby, the SOR in these directions can be shrunk and the shape can be non-continuous in these regions.
\end{itemize}

A general analytical description of the SOR in the multi-cell network is challenging, since it is determined not only by the network topology, but also by the locations of all pilot-contaminating users in adjacent cells. Moreover, the complicated shape of the SOR makes difficulties in calculating and minimizing the corresponding area. Nevertheless,
in practice, pilot scheduling and reuse schemes can be utilized to alleviate these adverse effects which are caused by pilot contamination, e.g., \cite{Yin,You}. }

\section{Simulation Results}

Simulation results are shown in this section.
We set $d_0 = 0.5$, $\alpha = 3$, $P_{\rm tot} = 1{\rm W}$, $N_0 = 10^{-5}$${\rm mW}$ and for simplicity, assume strong LOS environment such that $K_{\rm e;b}\to 1$. In this parameter setting, the receive SNR is 20dB when the transmitter-receiver distance is 100m.

At first, using \eqref{eq:43.2} and \eqref{eq:radius side lobe}, Fig.~\ref{fig:radius} provides a description of $\mathcal{R}_{\rm SOR}^{\rm uj}(\phi)$ in terms of the radius of the main lobe/side lobes. It is shown that the radius of the main lobe is monotonously increasing with $\phi$, indicating that reducing the signal power towards $\theta_{\rm b}$ enlarges the SOP in this direction. Clearly, allocating additional jamming power to the direction of $\theta_{\rm b}$ will further enlarge this radius, which suggests that jamming directly in the legitimate user's direction should be avoided. This conclusion coincides with the concept we followed for the design of Algorithm 3.

Moreover, as $\phi$ increases, the radius of the second side lobe is reduced first and then increases after a certain value, e.g., $\phi \approx 0.7$, indicating an optimal jamming power allocation in terms of minimizing the SOP {\em in this direction}. The side lobes with index $m \geq 3$ can be completely eliminated with proper jamming.
The results show that we can design jamming based on the partial information of $\mathcal{R}_{\rm sus}$. For example, in Fig.~\ref{fig:radius}, if we know that Eves are located in the direction ranges of side lobes with indices larger than $3$, then, allocating $\phi = 0.2$ is enough to secure the communication and the remaining power can be allocated to data transmission.

\begin{figure}[tbp]
  \centering
  \psfrag{Jamming power allocation coefficient}[][][0.7]{Jamming power allocation coefficient $\phi$}
  \psfrag{Secrecy outage probability}[][][0.7]{Secrecy outage probability}
  \psfrag{Nt is 100 uniform jamming}[l][l][0.7]{$N_t = 100$, Uniform Jamming}
  \psfrag{Nt is 100 directional jamming}[l][l][0.7]{$N_t = 100$, Directional Jamming}
  \psfrag{Nt is 50 uniform jamming}[l][l][0.7]{$N_t = 50$, Uniform Jamming}
  \includegraphics[width=1\columnwidth,height = 2.5in]{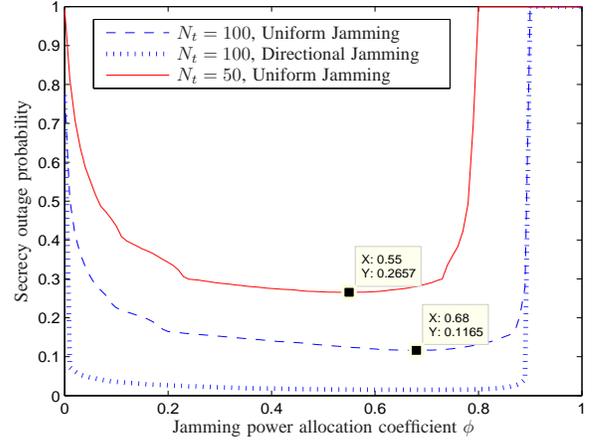}
  \caption{SOP vs. power allocation coefficient $\phi$. Parameters setting: $d_{\rm b} = 100m$, $\theta_{\rm b} = 0^\circ$, $\mathcal{D}_{\rm e} = [50m, 100m]$, $\mathcal{A}_{\rm e} = [-15^\circ, 15^\circ]$, $L$ = 10, and $R_{\rm th} = 10$bps/Hz.}
  \label{fig:1}
\end{figure}

Next, we depict the SOP vs. $\phi$ in Fig.~\ref{fig:1}, with randomly generated $\theta_{{\rm e}}$ and $d_{{\rm e}}$ within the range $\mathcal{A}_{\rm e} = [-15^\circ, 15^\circ]$ and $\mathcal{D}_{\rm e} = [50m, 100m]$, respectively. Note that the smoothless of the curves is not due to the lack of simulation trials, but caused by the fact that $\mathcal{P}_{\rm out}$ is a piecewise function, as shown in \eqref{eq:19} and \eqref{eq:44}. In addition, the SOP rapidly increases to 1 when $\phi$ exceeds $\phi^{\rm max}$ in \eqref{eq:22.1}. From Fig.~\ref{fig:1}, we can see that 1) the SOP with optimal $\phi$ is much smaller than that without jamming, i.e., $\phi = 0$, as anticipated in Proposition \ref{prop:benefit range}; 2) the SOP decreases as $N_t$ increases since larger $N_t$ results in higher received power at Bob and less leakage to Eve. Moreover, the optimal $\phi$ increases with $N_t$ because with larger $N_t$, allocating more power to data transmission is not efficient in increasing the achievable rate of Bob because of the logarithmic slope of the rate function; and 3) the SOP is further substantially reduced with directional jamming.

\begin{figure}[tbp]
  \centering
  \psfrag{Normalized crosstalk between Eve and Bob}[][][0.7]{Normalized crosstalk between Eve and Bob ($s_{{\rm e;b}}$)}
  \psfrag{Optimal jamming power allocation coefficient}[][][0.7]{Optimal jamming power allocation coefficient ($\phi^{\tt opt}$)}
  \psfrag{Monte Carlo}[l][l][0.7]{Monte Carlo}
  \psfrag{Analytical}[l][l][0.7]{Analytical}
  \psfrag{d0 dominates}[][][0.7]{$\phi_0$ dominates}
  \psfrag{dopt dominates}[][][0.7]{$\phi_g^{\tt opt}$ dominates}
  \psfrag{Nt = 100}[][][0.65]{$N_t = 100$}
  \psfrag{db = 100}[][][0.65]{$d_{\rm b} = 100{\rm m}$}
  \psfrag{db = 150}[][][0.65]{$d_{\rm b} = 150{\rm m}$}
  \psfrag{Nt = 50}[][][0.65]{$N_t = 50$}
  \includegraphics[width=1\columnwidth,height = 2.5in]{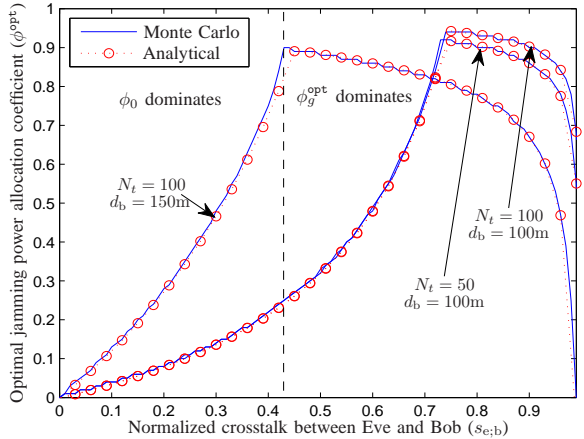}
  \caption{Optimal jamming power coefficient $\phi^{\tt opt}$ vs. $s_{\rm e;b}$ under Corollary 3 with $\mathcal{D}_{\rm e} = [50m, 350m]$ and $R_{\rm th} = 5$ bps/Hz.}
  \label{fig:3}
\end{figure}

Fig.~\ref{fig:3} shows the optimal jamming power coefficient $\phi^{\tt opt}$ as a function of the normalized crosstalk $s_{\rm e;b}$, for uniform jamming. We compare the derived $\phi^{\tt opt}$ in \eqref{eq:phi opt} with Monte Carlo simulations and show a good match between them. From Fig.~\ref{fig:3}, we first observe that each curve is divided into two parts, respectively representing that $\phi_0$ or $\phi_g^{\tt opt}$ dominates the optimal result in \eqref{eq:phi opt}. The division is emphasized using a vertical dash line for the case with $d_{\rm b} = 150$m. When $d_{\rm b} = 100$m, in the $\phi_0$-dominating region, the curves with $N_t = 50$ and $100$m coincide with each other since $N_t$ does not affect $\phi_0$ in \eqref{eq:35.1}. In the $\phi_g^{\tt opt}$-dominating region, $\phi^{\tt opt}$ increases with $N_t$ with given $s_{\rm e;b}$.
The region-division in Fig.~\ref{fig:3} can be explained as follows: in the left region, $s_{\rm e;b}$ is small enough so the channels between Eves and Bob can be considered as asymptotically orthogonal, i.e., $\mathbf{h}_{\rm e} \in {\rm null}(\mathbf{h}_{\rm b})$. In this case, more signal power is leaked to Eve with larger $s_{\rm e;b}$, hence we need more jamming power allocated in ${\rm null}(\mathbf{h}_{\rm b})$ to degrade Eve's channel.
However, in the $\phi_g^{\tt opt}$-dominating region, the value of $s_{\rm e;b}$ is large, which indicates that channels from Alice to Eves and Bob could be highly aligned, i.e., $\mathbf{h}_{\rm e} \notin {\rm null}(\mathbf{h}_{\rm b})$. In this case, jamming in ${\rm null}(\mathbf{h}_{\rm b})$ is not efficient to degrade Eve's channel and the jamming can be a waste of transmit power. Thus, the optimal jamming power starts decreasing with $s_{\rm e;b}$.

\begin{figure}[tbp]
  \centering
  \psfrag{x}[][][0.7]{$d_{\rm b}$ (m)}
  \psfrag{y}[][][0.7]{Area of the secrecy outage region ($m^2$)}
  \psfrag{a}[l][l][0.7]{Without jamming}
  \psfrag{b}[l][l][0.7]{Uniform jamming}
  \psfrag{c}[l][l][0.7]{Directional jamming (Algo. $2$)}
  \psfrag{Directional jamming Algorithm 2 ----}[l][l][0.7]{Directional jamming (Algo. $3$)}
  \includegraphics[width=1\columnwidth,height = 2.5in]{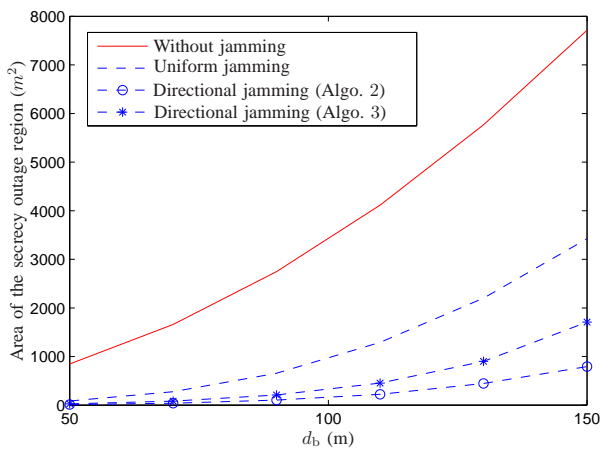}
  \caption{Area of the secrecy outage region vs. $d_{\rm b}$, results are shown for variant jamming algorithms. Parameters setting: $N_t = 50$, $R_{\rm th} = 5$, and $\theta_{\rm b} = 0^\circ$.}
  \label{fig:SORvsd}
\end{figure}

In Fig.~\ref{fig:SORvsd}, for the case that without information of $\mathcal{R}_{\rm sus}$, we compare the area of SOR achieved by different algorithms. For uniform jamming, the optimal $\phi$ is found via one dimensional linear search by minimizing the area described in \eqref{eq:area sor 1}. For all curves, the SOR enlarges with increasing $d_{\rm b}$, since larger $d_{\rm b}$ results in weaker signal power received at Bob. With uniform jamming, the area of the SOR can be reduced approximately by half compared with the non-jamming case. This considerable reduction of SOR area together with the least implementation complexity make uniform jamming still a good option for practical system design. Using the directional jamming with Algorithm $2$, the SOR area can be further reduced, but it induces the highest complexity among all schemes. At last, we note that for the directional jamming, Algorithm $3$ achieves slightly larger area of SOR than Algorithm 2, where the difference becomes smaller especially for small $d_{\rm b}$. However, the implementation complexity can be greatly reduced by Algorithm 3, which makes it being a reasonable choice that strikes a compromise between complexity and performance. It can also be confirmed from Fig.~\ref{fig:SORvsd} that even without any information of $\mathcal{R}_{\rm sus}$, directional jamming can still be utilized to enhance communication secrecy.

\begin{figure}[tbp]
  \centering
  \psfrag{Without jamming}[l][l][0.7]{Without jamming}
  \psfrag{Uniform jamming}[l][l][0.7]{Uniform jamming}
  \psfrag{Directional jamming, Algo. 1 aaaaaaaaaaaaa}[l][l][0.7]{Directional jamming (Algo. 1)}
  \psfrag{Directional jamming, Algo. 3 aaaaaaaaaaaaa}[l][l][0.7]{Directional jamming (Algo. 3)}
  \psfrag{Secrecy outage probability}[l][l][0.7]{Secrecy outage probability}
  \psfrag{db (m)}[l][l][0.7]{$d_{\rm b}$ (m)}
  \includegraphics[width=1\columnwidth,height = 2.5in]{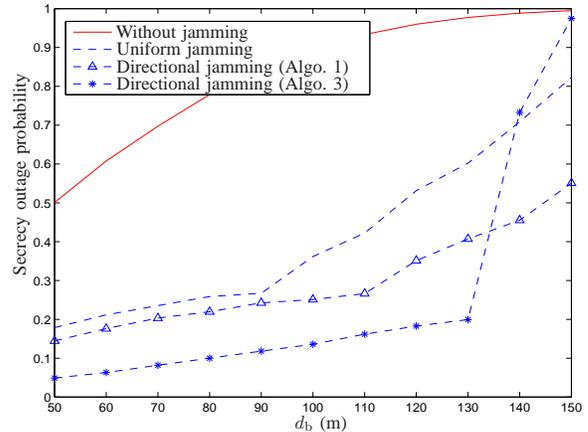}
  \caption{Secrecy outage probability vs. $d_{\rm b}$, results are shown for no jamming and uniform/directional jamming algorithms. Parameters setting: $N_t = 100$, $R_{\rm th} = 10$bps/Hz, $\theta_{\rm b} = 0^\circ$, $L = 10$, $\mathcal{A}_{\rm e} = [-30^\circ, 30^\circ]$, and $\mathcal{D}_{\rm e} = [50{\rm m}, 200{\rm m}]$.}
  \label{fig:SOP}
\end{figure}

{We depict the SOP vs. $d_{\rm b}$ in Fig.~\ref{fig:SOP} for a particular example scenario, where $\mathcal{R}_{\rm sus}$ is defined by $\mathcal{A}_{\rm e} = [-30^\circ, 30^\circ]$ and $\mathcal{D}_{\rm e} = [50{\rm m}, 200{\rm m}]$. Four schemes are compared, i.e., without jamming, uniform jamming, and directional jamming schemes Algorithm 1 and Algorithm 3. Correspondingly, we also depict the SORs achieved by each schemes in Fig.~\ref{fig:SORexample} to help us better understanding the relation between the SOR and SOP. We note that in Fig.~\ref{fig:SOP}, the smoothless of the curves is caused by the area calculation of the intersection between two complicated-shaped regions, not by the lacking of simulation trials.

As shown in Fig.~\ref{fig:SOP}, all jamming schemes can achieve lower SOP compared to that without jamming. In the small-$d_{\rm b}$ region, directional jamming schemes outperform uniform jamming. However, the performance improvement from uniform jamming to directional jamming Algorithm $1$ is small because Algorithm $1$ directionally allocates jamming power towards $\mathcal{R}_{\rm sus}$ to suppress the SOR side lobes (as shown in Fig.~\ref{fig:SORexample} (c)), whereas when Bob is close to Alice, most side lobes are already very small with uniform jamming. Moreover, in the small-$d_{\rm b}$ region, directional jamming Algorithm $3$ achieves the lowest SOP for the reason that it focuses on only two dominating SOR side lobes (as shown in Fig.~\ref{fig:SORexample} (d)) hence the jamming power can be used more efficiently in Algorithm 3 than Algorithm 1, where jamming is uniformly performed to all directions within $\mathcal{R}_{\rm sus}$. In this example, the two dominating side lobes are covered by $\mathcal{R}_{\rm sus}$, thereby they contribute more in the SOP calculation compared with the other side lobes. However, in a different scenario where $\mathcal{R}_{\rm sus}$ does not cover the two dominating side lobes, it cannot be concluded that Algorithm 3 always outperforms Algorithm 1.

At last, we note that as $d_{\rm b}$ increases, the performance of Algorithm 3 degrades and Algorithm 1 outperforms the others. The reason is, when $d_{\rm b}$ is large, all SOR side lobes correspondingly become large as shown in Fig.~\ref{fig:SORexample} (d). In this case, besides the two dominating side lobes, the impact from the other side lobes cannot be simply ignored as that has been done in Algorithm $3$. In conclusion, in practice, appropriate jamming scheme should be determined according to both the available information of $\mathcal{R}_{\rm sus}$ and the location information of Bob such as $d_{\rm b}$.}

\begin{figure}[tbp]
  \centering
\psfrag{Without jamming}[c][c][0.7]{(a) Without jamming}
\psfrag{Uniform jamming, without info. of Rsus}[c][c][0.7]{(b) Uniform jamming}
\psfrag{Directional jamming, with info. of Rsus}[c][c][0.7]{(c) Directional jamming (Algo. 1)}
\psfrag{Directional jamming, without info. of Rsus}[c][c][0.7]{(d) Directional jamming (Algo. 3)}
  \includegraphics[width=1\columnwidth]{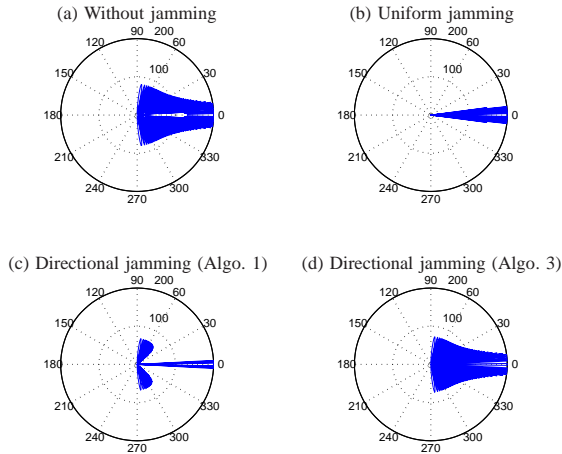}
  \caption{Description of the corresponding SORs for the $4$ schemes in Fig.~\ref{fig:SOP}, when $d_{\rm b} = 150{\rm m}$.}
  \label{fig:SORexample}
\end{figure}

\section{Conclusions}
In this paper, we first define and analytically describe the SOR for secure communication in massive MIMO Rician channels, and derive expressions of the SOP. We then determine a jamming-beneficial range, indicating that uniform jamming is useful in reducing the SOP when the distance from Eve to Alice is less than a threshold. Optimal jamming and signal power allocation is investigated for uniform jamming, furthermore, for both conditions with and without the information of the suspicious area, we propose directional jamming algorithms, which makes use of the jamming power more efficiently and further reduce the SOP. We further extend the discussions to multiuser and multi-cell scenarios where future challenges are also described. In conclusion, we claim that uniform jamming still helps the communication secrecy in massive MIMO systems, and the proposed directional jamming outperforms conventional uniform jamming schemes.

\appendices

\section{Proof of Lemma \ref{le:CT cha}}\label{proof:le CT}
{\em Proof of 1):}
We can rewrite $s_{i;j}(\theta_i, \theta_j)$ in \eqref{eq:23.2} as a function of  $\Delta_{i;j}$ as follows
\begin{equation}\label{eq:13}
  s_{i;j}(\theta_i,\theta_j) = K_{i;j} s\left(\Delta_{i;j}\right).
\end{equation}
By applying \cite[(14)]{RavGlobecom07} to $t_{i;j}$ in \eqref{eq:23.2}, $s(x)$ in \eqref{eq:13} can be represented as
\begin{equation}\label{eq:sx def}
  s\left(x\right) = \left\{ {\begin{array}{*{20}{c}}
{1,}&{x  = 0}\\
{\frac{1}{N_t^2}\frac{{{{\sin }^2}\left( {{N_t}\pi dx } \right)}}{{{{\sin }^2}\left( {\pi dx } \right)}},}&{x  \ne 0}
\end{array}} \right.
\end{equation}
which is a sinc-like function, has one main lobe and side lobes with decreasing amplitudes.

{\em Proof of 2):}
In order to describe the CDF of $s_{i;j}$, we first characterize $s(x)$ in \eqref{eq:sx def} by some cross points and peak values, shown in Fig.~\ref{fig:showpoints} and defined in the following.
\begin{figure}[tbp]
  \centering
  \psfrag{main lobe}[][][0.7]{main lobe}
  \psfrag{s(x)}[][][0.7]{$s(x)$}
  \psfrag{x}[][][0.7]{$x$}
  \psfrag{X}[][][0.7]{$\mathcal{X}_{\mathcal{A}_{i}}$}
  \psfrag{a}[][][0.65]{${\rm CP}_0(u)$}
  \psfrag{b}[][][0.65]{${\rm CP}_{1,1}(u)$}
  \psfrag{c}[][][0.65]{${\rm CP}_{1,2}(u)$}
  \psfrag{d}[][][0.7]{$\beta_{\mathcal{A}_{i}}^*$}
  \psfrag{e}[][][0.7]{${\rm PV}_1$}
  \psfrag{first legend}[l][l][0.7]{$y = s(x)$}
  \psfrag{sec legend}[l][l][0.7]{$y = u$}
  \includegraphics[width=1\columnwidth]{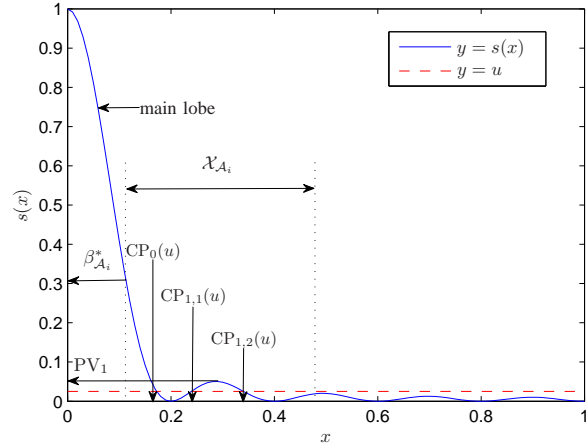}
  \caption{Description of $s(x)$, the cross points and peak values.}
  \label{fig:showpoints}
\end{figure}
\begin{definition}
The cross points and peak values are defined as
  \begin{itemize}
  \item {\em Peak Values}: The peak value of the $m$-th side lobe of $s(x)$ can be approximated by
  \begin{equation}\label{eq:peak value}
    {\rm PV}_m \approx \frac{1}{N_t^2}\frac{1}{\sin^2\left(\pi\frac{m + \frac{1}{2}}{N_t}\right)}\approx \frac{1}{\pi^2 \left(m+\frac{1}{2}\right)^2}
  \end{equation}
  which {is obtained by noting that $\sin (x) \approx x$ when $x$ is small,} and it becomes asymptotically exact as $N_t$ is large. ${\rm PV}_0 = 1$ corresponds to the main lobe.
  \item {\em Cross Points}: When $u < {\rm PV}_M$, ${\rm CP}_{m,i}(u), i=1,2$ denotes the $i$-th cross point between $y = u$ and $y = s(x)$ in side lobe $m$ ($m<M$), ${\rm CP}_0(u)$ is the cross point in the main lobe.
\end{itemize}
\end{definition}

 Using \eqref{eq:13}, we have $F_{s_{i;j}}(K_{i;j}u) = \Pr\{s\left(\Delta_{i;j}\right) < u\}$. According to Fig.~\ref{fig:showpoints}, $\Pr\{s\left(\Delta_{i;j}\right) < u\}$ can be evaluated by calculating the probability that $\Delta_{i;j}$ falls within the discrete intervals determined by ${\rm CP}_{m,i}(u)$ and ${\rm CP}_0(u)$. On the other hand, recalling that $\Delta_{i;j} \triangleq |\sin \theta_{i} - \sin \theta_{j}|$ and $\theta_{i}$ follows uniform distribution, the CDF of $\Delta_{i;j}$ can be written as
  \begin{multline}\label{eq:46.1}
  F_{\Delta}(z)\\
   \triangleq \frac{1}{\theta_{\max} - \theta_{\min}}\bigg[ \min\left(\sin^{-1}\left(\min\left(1, z + \sin\theta_{j}\right)\right), \theta_{\max}\right) \\
   -\max\left(\sin^{-1}\left(\max\left(-1, -z + \sin\theta_{j}\right)\right), \theta_{\min}\right)\bigg]^+.
\end{multline}
 Using $F_{\Delta}(z)$ to describe the probability that $\Delta_{i;j}$ falls within the described intervals leads to \eqref{eq:19}.

{\em Proof of 3):}
Given $\theta_{j}$, the feasible range of $\Delta_{i;j}$ is determined by $\mathcal{A}_i$, i.e., the angle range that node $i$ distributed in. Use $\mathcal{X}_{\mathcal{A}_{i}}$ to denote this feasible range (shown in Fig.~\ref{fig:showpoints}), it holds that $\mathcal{X}_{\mathcal{A}_{i}} \subset [0,1]$.
Given $\mathcal{A}_{i}$, the feasible range of $s_{i;j}$ can be determined as $[0, K_{i;j}\beta^*_{\mathcal{A}_{i}}]$, where $\beta^*_{\mathcal{A}_{i}} \triangleq \mathop {\max}\limits_{x \in \mathcal{X}_{\mathcal{A}_{i}}} s(x)$. It is clear that $\beta^*_{\mathcal{A}_{i}} \leq 1$, leading us to the conclusion.

\section{Proof of Proposition \ref{prop:benefit range}}\label{proof:benificial}
For the ease of description, we first define
  \begin{equation}\label{eq:22}
 h\left(x\right) \triangleq \frac{x N_t 2^{R_{\rm th}}}{{1 + x d^{-\alpha}_b N_t} - 2^{R_{\rm th}}}.
  \end{equation}
Letting $t = \frac{z^{\alpha} + \tilde{P}_{\rm jam}}{a_1 + \tilde{P}_{\rm jam}}$ where $a_1 =h (\tilde{P}_b )$, and $t_0 = \frac{z^{\alpha}}{a_2}$ where $a_2 = h (\tilde{P}_{\rm tot})$, we rewrite \eqref{eq:44} as
\begin{equation}
  \mathcal{P}_{\rm out} = 1 -
    \Bigg\{\int_{d_{\min}}^{d_{\max}} F_{s_{{\rm e};{\rm b}}}\left(t\right)
    \times \frac{2z}{d_{\max}^2 - d_{\min}^2} {\rm d}z\Bigg\}^L.
\end{equation}
{Since $F_{s_{{\rm e};{\rm b}}}(\cdot)$ is an increasing function}, $\mathcal{P}_{\rm out}$ decreases with $t$ in the domain of $F_{s_{{\rm e};{\rm b}}}(\cdot)$. Therefore, if the following conditions are satisfied, we can conclude that jamming is beneficial.
\begin{enumerate}
\item
There exists a positive value of $\tilde{P}_{\rm jam}$, which holds $t - t_0 > 0$ for any $z \in \mathcal{D}_{\rm e}$; and
\item
$z \in \mathcal{D}_e$ such that $t_0 < s_{\rm e;b}^{\max}$
 \end{enumerate}
where {condition 1) defines the scenario, in which jamming can always result in a larger $t$ than $t_0$ (corresponding to the no-jamming case), thereby leading to lower $\mathcal{P}_{\rm out}$ since it is a decreasing function in $t$. Moreover, condition 2) ensures that $t_0$ is less than the maximum feasible value of $s_{\rm e;b}$, otherwise, $\mathcal{P}_{\rm out}$ is always zero either with or without jamming under condition 1), hence the benefits of jamming cannot be concluded.}
For 1), it is equivalent to prove that
  \begin{equation}\label{eq:27.2}
    \frac{z^{\alpha} + \tilde{P}_{\rm jam}}{a_1 + \tilde{P}_{\rm jam}} > \frac{z^{\alpha}}{a_2}
  \end{equation}
has a positive solution of $\tilde{P}_{\rm jam}$.
Note that with proper parameter setting that $\phi$ is no larger than $\phi^{\max}$ described in \eqref{eq:22.1}, we have $a_1 > 0$ and $a_2 > 0$, and note that if
  \begin{equation}\label{eq:28.1}
    d_{\max}^{\alpha} < a_2
  \end{equation}
  then $z^{\alpha} < a_2$ holds for any $z$. In this case, \eqref{eq:27.2} can be equivalently rewritten as
  \begin{equation}\label{eq:29.1}
    \tilde{P}_{\rm jam} > \frac{(a_1 - a_2)z^{\alpha}}{a_2 - z^{\alpha}}.
  \end{equation}
Noting that $\frac{{\rm d}h(x)}{{\rm d}x} < 0$, it is clear from \eqref{eq:29.1} that $\tilde{P}_{\rm jam} > 0$, thus \eqref{eq:28.1} is a sufficient condition to satisfy 1).
On the other hand, as $t_0 \leq \frac{d_{\max}^{\alpha}}{a_2}$, a sufficient condition that satisfies 2) is
\begin{equation}\label{eq:31.2}
  {d_{\max}^{\alpha}} < s_{\rm e;b}^{\max}a_2.
\end{equation}
 Here, according to Lemma \ref{le:CT cha}, we have $s_{\rm e;b}^{\max} \leq 1$. Hence, the intersection of \eqref{eq:28.1} and \eqref{eq:31.2} is equal to \eqref{eq:31.2}, and recalling the definition of $a_2$ leads to the final result.

\section{Proof of Corollary \ref{coro:area side lobe}}\label{proof:coro2}

First, from \eqref{eq:13}, \eqref{eq:sx def}, $s_{\rm e;b}(\theta_{\rm e})$ in the $m$-th side lobe can be presented as
\begin{equation}\label{eq:50}
  s_{\rm e;b}(\theta_{\rm e}) = \frac{K_{\rm e;b}}{N_t^2} \frac{{{{\sin }^2}\left( {{N_t}\pi dx } \right)}}{{{{\sin }^2}\left( {\pi dx } \right)}},~ x\in\left[\frac{m}{N_t d}, \frac{m + 1}{N_t d}\right].
\end{equation}
In \eqref{eq:50}, sine functions appear in both the numerator and denominator, making it infeasible to obtain closed-form results in further analysis. Hence, we find an upper bound of \eqref{eq:50} by fixing the value of $x$ to the left end point of its domain, i.e., $x = \frac{m}{N_t d}$, in the denominator.
\begin{equation}\label{eq:50.1}
  s_{\rm e;b}(\theta_{\rm e}) \leq \frac{K_{\rm e;b}}{\pi^2 m^2}\sin^2\left(N_t \pi d x\right).
\end{equation}
For the ease of description, we consider only the condition that $\theta_{\rm e} > 0$. With $\theta_{\rm b} = 0$, we can rewrite $x$ (defined as $x = |\sin \theta_{\rm e} - \sin \theta_{\rm b}|$) in terms of $\theta_{\rm e}$ as
\begin{equation}\label{eq:x theta}
x = \sin\theta_{\rm e} \leq \theta_{\rm e}.
\end{equation}
Note that the side lobes that are close to the main lobe are more important in contributing to the area of the SOR, as a result, the value of $\theta_{\rm e}$ that should be concerned is very small. Therefore, the upper bound in \eqref{eq:x theta} can be very tight. Combining \eqref{eq:50.1} and \eqref{eq:x theta} we have
\begin{equation}\label{eq:53}
  s_{\rm e;b}(\theta_{\rm e}) \leq \frac{K_{\rm e;b}}{\pi^2 m^2}\sin^2\left(N_t \pi d  \theta_{\rm e}\right).
\end{equation}
Substitute \eqref{eq:53} into \eqref{eq:39.1}, and calculate the area by integral, we get
\begin{align}\label{eq:54}
&{\rm Area}\left(\mathcal{R}^{\rm uj}_{{\rm SOR},m}\right) \nonumber\\
&= \int_{\theta_{\rm e}\in \mathcal{A}_{m}} \frac{1}{2}\left(C_1(\phi)s_{\rm e;b}(\theta_{\rm e}) - C_2(\phi)\right) {\rm d}\theta_{\rm e}\\
\label{eq:55}
&\leq  \int_{\theta_{\rm e}\in \mathcal{A}_{m}} \frac{1}{2}\left(C_1(\phi)\frac{K_{\rm e;b}}{\pi^2 m^2}\sin^2\left(N_t \pi d  \theta_{\rm e}\right) - C_2(\phi)\right) {\rm d}\theta_{\rm e}\\
\label{eq:56}
&= \frac{1}{N_t \pi d}\int_{m\pi}^{(m + 1)\pi} \frac{1}{2}\left(C_1(\phi)\frac{K_{\rm e;b}}{\pi^2 m^2}\sin^2\left(\eta\right) - C_2(\phi)\right) {\rm d}\eta\\
\label{eq:57}
&= \frac{1}{4N_t d}\left(\frac{K_{\rm e;b}}{\pi^2 m^2}C_1(\phi) - 2C_2(\phi)\right)
\end{align}
where $\mathcal{A}_{m} = [\frac{m}{N_t d}, \frac{m+1}{N_t d}]$ is the physical angle range of the $m$-th side lobe.
To obtain \eqref{eq:54}, we use $\alpha = 2$, and \eqref{eq:55} is obtained using \eqref{eq:53}. From \eqref{eq:55} to \eqref{eq:56}, we use the variable substitution $\eta = N_t \pi d  \theta_{\rm e}$, then \eqref{eq:57} is obtained by simple integral calculation of elementary functions.

\section{Proof of Corollary \ref{coro:equal theta}}\label{proof:coroET}
When $\theta_{{\rm e}_l} = \theta_{\rm e}$, $\forall l = 1,...,L$, $\mathcal{P}_{\rm out}$ in \eqref{eq:44} is determined only by $d_{\rm e}$ such that
\begin{equation}\label{eq:42.1}
  \mathcal{P}_{\rm out} = 1 - \left(\int_{\max[d_0(\phi,\theta_{\rm e}), d_{\min}]}^{d_{\rm max}}\frac{2z}{d_{\max}^2 - d_{\min}^2}{\rm d}z\right)^L
\end{equation}
where
\begin{align}\label{eq:34.1}
  d_0(\phi,\theta_{\rm e}) &= \left[\left\{h\left((1 - \phi)\tilde{P}_{\rm tot}\right)+ \phi \tilde{P}_{\rm tot}\right\}s_{{\rm e};{\rm b}}(\theta_{\rm e}) - \phi \tilde{P}_{\rm tot}\right]^{\frac{1}{\alpha}}\nonumber\\
  &\triangleq [g(\phi)]^{\frac{1}{\alpha}}
\end{align}
is the distance threshold. {Given $\theta_{\rm e}$, any Eve with a distance to Alice smaller than $d_0(\phi,\theta_{\rm e})$ can cause secrecy outage. Noting this, \eqref{eq:42.1} calculates the overall outage probability by assuming that $d_{\rm e}$ follows uniform distribution.}
Clearly, if $\mathop {\min }\limits_\phi  {d_0(\phi,\theta_{\rm e})} < d_{\min}$, {the integral range in \eqref{eq:42.1} becomes $[d_{\min},d_{\max}]$ hence }a zero SOP can be achieved; otherwise {if $\mathop {\min }\limits_\phi  {d_0(\phi,\theta_{\rm e})} > d_{\max}$, }a definite outage occurs with probability $1$.\footnote{To facilitate the analysis, we assume the definite outage will not happen by letting $d_{\max}$ to be large enough such that $d_{\max}> \mathop {\max }\limits_{\theta_{\rm e}} \mathop {\min }\limits_\phi  {d_0(\phi,\theta_{\rm e})}$.}  Rewrite $g(\phi)$ in \eqref{eq:34.1} as
\begin{equation}\label{eq:34}
  g(\phi) \triangleq s_{\rm e;b}(\theta_{\rm e})g_1(\phi) + (1 - s_{\rm e;b}(\theta_{\rm e}))g_2(\phi)
\end{equation}
where $g_1(\phi) \triangleq h((1 - \phi)\tilde{P}_{\rm tot})$, $g_2(\phi) \triangleq -\phi \tilde{P}_{\rm tot}$ and $h(\cdot)$ is defined in \eqref{eq:22}. Clearly, minimizing \eqref{eq:42.1} is equivalent to minimizing \eqref{eq:34}.
Since $g_1(\phi)$ is concave and $g_2(\phi)$ is linear, $g(\phi)$ is concave. Hence, letting $\frac{\partial g(\phi )}{\partial \phi } = 0$, the optimal $\phi$ satisfying $\phi < \phi^{\max}$ (as in \eqref{eq:22.1}) is given by $\phi^{\tt opt}_g$ in
\eqref{eq:35}.

Note that if $d_0(\phi^{\tt opt}_g, \theta_{\rm e}) < d_{\min}$, setting $\phi^{\tt opt} = \phi^{\tt opt}_g$ is not the only choice since the solution of $d_0(\phi,\theta_{\rm e}) = d_{\min}$ (if exists), denoted as $\phi_0$, also achieves zero SOP.
Therefore, we need to check the value of $\phi_0$ and compare it with $\phi^{\tt opt}_g$ to determine the final optimal jamming power allocation. Note that in \eqref{eq:34}, the value of $g_1(\phi)$ is usually much less than $g_2(\phi)$. Hence, when $s_{\rm e;b}(\theta_{\rm e})$ is not so large, $g(\phi)$ can be well approximated by a linear function as $g(\phi) \approx s_{\rm e;b}(\theta_{\rm e})h({\tilde P}_{\rm tot}) - (1 - s_{\rm e;b}(\theta_{\rm e})){\tilde P}_{\rm tot}\phi$. Using this approximation, we get
$\phi_0 = \frac{s_{\rm e;b}(\theta_{\rm e})h({\tilde P}_{\rm tot}) - d_{\min}^{\alpha}}{(1 - s_{\rm e;b}(\theta_{\rm e})){\tilde P}_{\rm tot}}$.

If $\phi_0  \notin  [0,1]$, it is not a feasible solution in practice. If $\phi_0  \in  [0,1]$, both $\phi_0$ and $\phi_g^{\tt opt}$ are able to achieve zero SOP. In this case, we choose the smaller one between $\phi^{\tt opt}_g$ and $\phi_0$ to save jamming power such that
$\phi^{\tt opt} = \min\{\phi^{\tt opt}_g, \phi_0\}$.

\section{Derivation of Algorithm 3}\label{proof:alg3}
Algorithm 3 is proposed based on the following assumptions and approximations:


\begin{enumerate}
  \item Assuming that the angle ranges occupied by every side lobe (and the main lobe) of $\mathcal{R}_{\rm SOR}^{\rm dj}(\bar{\mathbf{p}})$ are approximately the same (which is $\frac{\pi}{M+1}$ assuming that there are in total one main lobe and $M$ side lobes within the angle range $\left[\frac{\pi}{2},\frac{\pi}{2}\right]$), an upper bound of the integral in \eqref{eq:71} can be approximated by
      \begin{equation}\label{eq:72}
        {{\rm Area}_{\rm UB}\left(\mathcal{R}_{\rm SOR}^{\rm dj}(\bar{\mathbf{p}})\right)} \approx \frac{\pi}{2(M+1)} \sum\limits_{m = 0}^M \left(\bar{d}_{{\rm e}}^{\rm dj}(\bar{\mathbf{p}}, \theta_m)\right)^2
      \end{equation}
      {where the area of every side lobe is upper bounded by the area of its enclosing sector.}
  \item We assume that $\bar{\mathbf{s}}({\theta_{m}})^H \mathbf{v}_m = 1$ and $\bar{\mathbf{s}}({\theta_{n}})^H \mathbf{v}_m = 0, \forall n\neq m$. In practice, by defining $\mathbf{v}_m = \frac{\bar{\mathbf{s}}({\theta_{m}})}{||\bar{\mathbf{s}}({\theta_{m}})||}$, this assumption is easy to realize with large $N_t$, where asymptotic orthogonality holds.
      Now, according to \eqref{eq:70}, $\bar{d}_{{\rm e}}^{\rm dj}(\bar{\mathbf{p}}, \theta_m)$ in \eqref{eq:72} can be written as
      \begin{equation}\label{eq:73}
        \bar{d}_{{\rm e}}^{\rm dj}(\bar{\mathbf{p}}, \theta_m) = \left[\left(a_m - \tilde{P}_m\right)^{\frac{1}{\alpha}}\right]^+
      \end{equation}
      where $a_m = \frac{(1 - \phi)\tilde{P}_{\rm tot}N_t 2^{R_{\rm th}} s_{{\rm e;b}}(\theta_{m})}{{1 + (1 - \phi)\tilde{P}_{\rm tot}d^{-\alpha}_b N_t} - 2^{R_{\rm th}}}$ and $\tilde{P}_m = \frac{\bar{P}_m}{N_0}$ is the $m$-th element of $\tilde{\mathbf{p}}$, with $\bar{P}_m$ being the jamming power allocated on the $m$-th side lobe.
\end{enumerate}
From \eqref{eq:72} and \eqref{eq:73}, and given fixed $\phi$, we rewrite the original area minimization problem as
\begin{align}\label{eq:74}
  \min &\sum\limits_{m = 1}^M \left(a_m - \tilde{P}_m\right)^\frac{2}{\alpha}\\
  \label{eq:75}
  {\rm s.t.} &\sum_{m = 0}^M \bar{P}_m = \phi {P}_{\rm tot}
  , 0 \leq \tilde{P}_m \leq a_m.
\end{align}
By checking the Hessian matrix, it is easy to show that the objective function in \eqref{eq:74} is concave. To minimize a concave function, clearly, the optimal solution can be found only on the boundaries of the domain defined by \eqref{eq:75}. Recalling that the area of the side lobes decreases rapidly with larger lobe index, and being aware that jamming should be avoided within the main lobe to prevent degrading Bob's channel, we further simplify the problem by checking the boundary of the domain as described in \eqref{eq:boundary}.
 According to these discussions, Algorithm 3 is obtained.


\vspace{-1cm}

\begin{IEEEbiography}[{\includegraphics[width=1in,height=1.25in,clip,keepaspectratio]{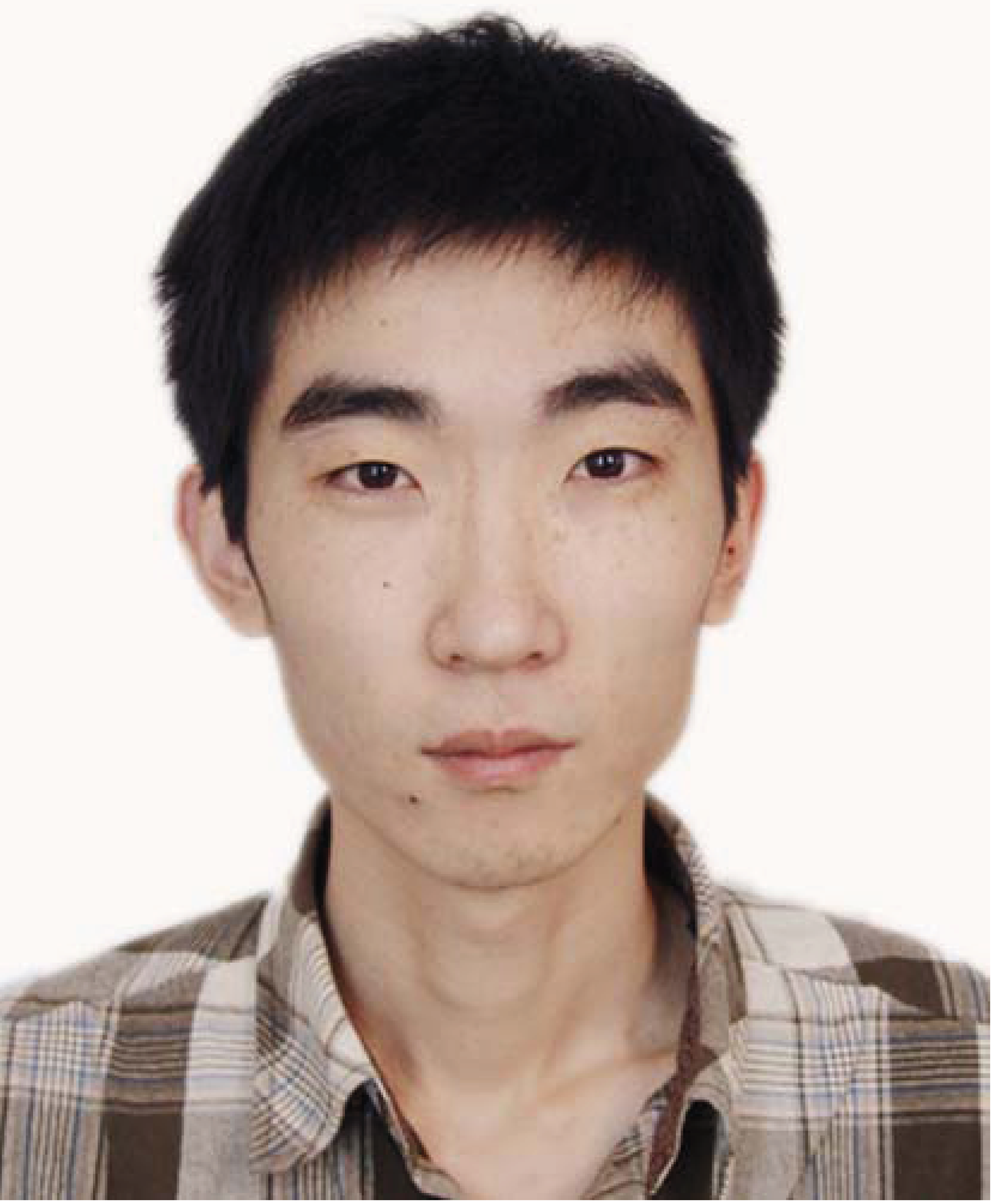}}]{Jue Wang}
(S'10-M'14) received the B.S. degree in communications engineering from Nanjing
University, Nanjing, China, in 2006, the M.S. degree and Ph. D. degree from the National Communications Research Laboratory, Southeast
University, Nanjing, China, respectively in 2009 and 2014.
 In 2014, he joined the School of Electronic and Information Engineering, Nantong University, Nantong, China. Meanwhile, he is with Singapore University of Technology and Design (SUTD) as a post-doctoral research fellow.

Dr. Wang has served as Technical Program Committee member for several IEEE conferences, and reviewer for several IEEE journals. He was awarded as Exemplary Reviewer of {\scshape IEEE Transactions on Communications} for 2014.
His research interests include
MIMO wireless communications, multiuser transmission, MIMO
channel modeling, massive MIMO systems and physical layer security.
\end{IEEEbiography}

\vspace{-1cm}

\begin{IEEEbiography}[{\includegraphics[width=1in,height=1.25in,clip,keepaspectratio]{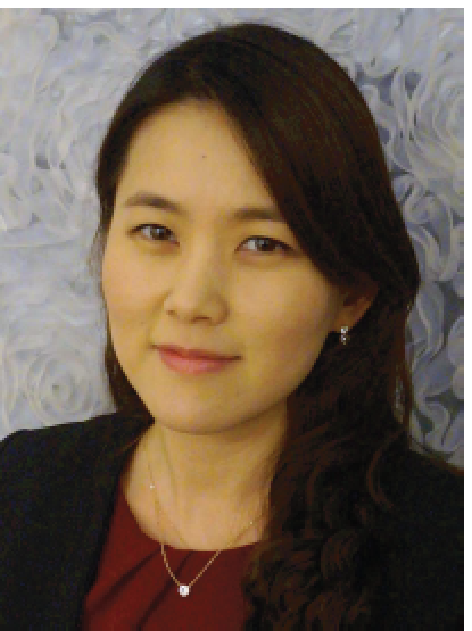}}]
{\bfseries Jemin Lee} (S'06-M'11)  is a Temasek Research Fellow 
at the Singapore University of Technology and Design (SUTD), Singapore.
She received the B.S. (with high honors), M.S., and Ph.D. degrees in Electrical and Electronic Engineering from Yonsei University, Seoul, Korea, in 2004, 2007, and 2010, respectively. She was a Postdoctoral Fellow at the Massachusetts Institute of Technology (MIT), Cambridge, MA from Oct. 2010 to Oct. 2013, and a Visiting Ph.D. Student at the University of Texas at Austin, Austin, TX from Dec. 2008 to Dec. 2009. Her current research interests include physical layer security, wireless security, heterogeneous networks, cognitive radio networks, and cooperative communications.
%

Dr.~Lee is currently an Editor for the {\scshape IEEE Transactions on Wireless Communications} and the {\scshape IEEE Communications Letters},
and served as a Guest Editor of the Special Issue on Heterogeneous and Small Cell Networks for the {\scshape ELSEVIER Physical Communication} in 2014.
She also served as a Co-Chair of the IEEE 2013 Globecom Workshop on Heterogeneous and Small Cell Networks,
and Technical Program Committee Member for numerous IEEE conferences.
She is currently a reviewer for several IEEE journals and has been recognized as an Exemplary Reviewer of {\scshape IEEE Communications Letters} and {\scshape IEEE Wireless Communication Letters} for recent several years.
%
She received the IEEE ComSoc Asia-Pacific Outstanding Young Researcher Award in 2014, the Temasek Research Fellowship in 2013, the Chun-Gang Outstanding Research Award in 2011, and the IEEE WCSP Best Paper Award in 2014.
\end{IEEEbiography}

\begin{IEEEbiography}[{\includegraphics[width=1in,height=1.25in,keepaspectratio]{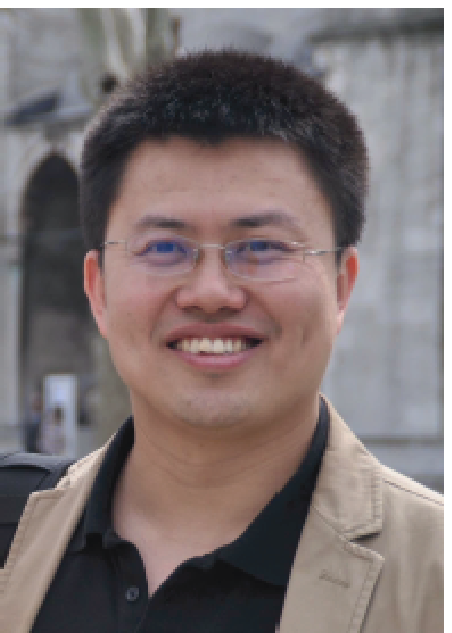}}]
{Fanggang Wang} (S'10-M'11) received the B.Eng. degree in 2005 and the Ph.D. degree in 2010 from the School of Information and Communication Engineering at Beijing University of Posts and Telecommunications, Beijing, China. From 2008 to 2010, he worked as a Visiting Scholar in Electrical Engineering Department, Columbia University, New York City, New York, USA. He was a Postdoctoral Fellow in Institute of Network Coding, the Chinese University of Hong Kong, Hong Kong SAR, China, from 2010 to 2012. He joined the State Key Lab of Rail Traffic Control and Safety, School of Electronic and Information Engineering, Beijing Jiaotong University, in 2010, where he is currently an Associate Professor. His research interests are in wireless communications, signal processing, and information theory. He chaired two workshops on wireless network coding (NRN 2011 and NRN 2012) and served as an Editor in several journals and the Technical Program Committee (TPC) members in several conferences.
\end{IEEEbiography}

\vspace{-7cm}

\begin{IEEEbiography}[{\includegraphics[width=1in,height=1.25in,keepaspectratio]{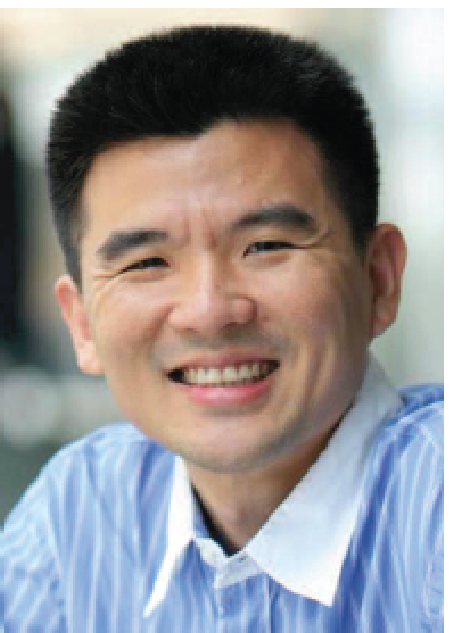}}]
{Tony Q. S. Quek}(S'98-M'08-SM'12) received the B.E.\ and M.E.\ degrees in Electrical and Electronics Engineering from Tokyo Institute of Technology, Tokyo, Japan, respectively. At Massachusetts Institute of Technology, he earned the Ph.D.\ in Electrical Engineering and Computer Science. Currently, he is an Assistant Professor with the Information Systems Technology and Design Pillar at Singapore University of Technology and Design (SUTD). He is also a Scientist with the Institute for Infocomm Research. His main research interests are the application of mathematical, optimization, and statistical theories to communication, networking, signal processing, and resource allocation problems. Specific current research topics include heterogeneous networks, green communications, smart grid, wireless security, internet-of-things, big data processing, and cognitive radio.

Dr.\ Quek has been actively involved in organizing and chairing sessions, and has served as a member of the Technical Program Committee as well as symposium chairs in a number of international conferences. He is serving as the technical chair for the PHY \& Fundamentals Track for IEEE WCNC in 2015, the Communication Theory Symposium for IEEE ICC in 2015, the PHY \& Fundamentals Track for IEEE EuCNC in 2015, and the Communication and Control Theory Symposium for IEEE ICCC in 2015. He is currently an Editor for the {\scshape IEEE Transactions on Communications}, the {\scshape IEEE Wireless Communications Letters}, and an Executive Editorial Committee Member for the {\scshape IEEE Transactions on Wireless Communications}. He was Guest Editor for the {\scshape IEEE Signal Processing Magazine} (Special Issue on Signal Processing for the 5G Revolution) in 2014, and the {\scshape IEEE Wireless Communications Magazine} (Special Issue on Heterogeneous Cloud Radio Access Networks) in 2015.

Dr.\ Quek was honored with the 2008 Philip Yeo Prize for Outstanding Achievement in Research, the IEEE Globecom 2010 Best Paper Award, the CAS Fellowship for Young International Scientists in 2011, the 2012 IEEE William R. Bennett Prize, the IEEE SPAWC 2013 Best Student Paper Award, and the IEEE WCSP 2014 Best Paper Award.
\end{IEEEbiography}

\end{document}